\documentclass[journal]{IEEEtran}

\usepackage{cite, url, array, calc, color}
\usepackage{algorithm, algorithmic}
\usepackage{graphics, epstopdf, graphicx, epsfig, psfrag, subfigure,}
\usepackage{amssymb, amsmath}
\usepackage{multicol,balance}

\newtheorem{theorem}{Theorem}[section]

\newtheorem{proposition}[theorem]{Proposition}

\newtheorem{definition}{Definition}
\newtheorem{example}{Example}

\newcommand{\ie}{{\em i.e., }}
\newcommand{\eg}{{\em e.g., }}

\addtolength{\textheight}{0.1in}
\addtolength{\topmargin}{-0.05in}

\hypersetup{breaklinks=true}

\begin{document}

\title{Optimal Source-Based Filtering of Malicious Traffic\\
\thanks{This work was supported by the NSF CyberTrust grant 0831530.}}

\author{\authorblockN{Fabio Soldo, {\em IEEE Student Member},  Katerina Argyraki, {\em IEEE Member}, and Athina Markopoulou, {\em IEEE Member}}
\thanks{Fabio Soldo and Athina Markopoulou are with the Department of Electrical Engineering and Computer Science at the University of California, Irvine (email: \{fsoldo, athina\}@uci.edu).}
\thanks{Katerina Argyraki is with the School of Computer Science and Communication Sciences at EPFL, Switzerland (email: katerina.argyraki@epfl.ch).}
}

\maketitle

\begin{abstract}

In this paper, we consider the problem of blocking malicious traffic on the Internet, via source-based filtering.
 In particular, we consider filtering via access control lists (ACLs): these are already available at the routers today but are a scarce resource because they are stored in the expensive ternary content addressable memory (TCAM). Aggregation (by filtering source prefixes instead of individual IP addresses) helps reduce the number of filters, but comes also at the cost of blocking legitimate traffic originating from the filtered prefixes. We show how to optimally choose which source prefixes to  filter, for a variety of realistic attack scenarios and operators' policies. In each scenario, we design optimal, yet computationally efficient, algorithms. Using logs from {\tt Dshield.org}, we  evaluate the algorithms and demonstrate that they bring significant benefit in practice.

\end{abstract}


\section{Introduction}
\label{sec:intro}

How can we protect our network infrastructure from malicious traffic, such as scanning,
malicious code propagation, spam, and distributed denial-of-service (DDoS) attacks?
These activities cause problems on a regular basis, ranging from simple annoyance
to severe financial, operational and political damage to companies, organizations
and critical infrastructure. In recent years, they have increased
in volume, sophistication, and automation, largely enabled by botnets, which are used
as the platform for launching these attacks.

Protecting a victim (host or network) from malicious traffic is a hard problem that
requires the coordination of several complementary components, including
non-technical (\eg business and legal) and technical solutions (at the application and/or network level).
Filtering support from the network is a fundamental building block in this effort.
For example, an Internet service provider (ISP) may use filtering in response to an ongoing DDoS attack,
to block the DDoS traffic before it reaches its clients.
Another ISP may want to proactively identify and block traffic carrying malicious code before
it reaches and compromises vulnerable hosts in the first place. In either case,
filtering is a necessary operation that must be performed within the network.

Filtering capabilities are already available at routers today via access control lists (ACLs).
 ACLs enable a router to match a packet header against pre-defined rules and take pre-defined actions on the matching packets~\cite{acl1},
and they are currently used for enforcing a variety of policies, including infrastructure protection \cite{acl0}.
For the purpose of blocking malicious traffic, a filter is a simple ACL
rule that denies access to a source IP address or prefix.
To keep up with the high forwarding rates of modern routers, filtering is implemented
in hardware: ACLs are typically stored in ternary content addressable memory (TCAM), which
allows for parallel access and reduces the number of lookups per forwarded packet.
However, TCAM is more expensive and consumes more space and
power than conventional memory. The size and cost of TCAM puts a limit on the number of filters,
and this is not expected to change in the near future.\footnote{
A router linecard or supervisor-engine card typically supports a single TCAM chip with tens of thousands
of entries. For example, the Cisco Catalyst 4500, a mid-range switch, provides a 64,000-entry TCAM to be
shared among all its interfaces (48- 384). Cisco 12000, a high-end router used at the Internet core, provides
20,000 entries that operate at line-speed per linecard (up to 4 Gigabit Ethernet interfaces). The Catalyst 6500
switch can fit 16K-32K patterns and 2K-4K masks in the TCAM. Depending on
how an ISP connects to its clients, each individual client can typically use only part of these ACLs, \ie a few
hundreds to a few thousands filters.
}
With thousands or tens of thousands of filters per path, an ISP alone cannot hope to block
the currently witnessed attacks, not to mention attacks from multimillion-node botnets expected in the near future.


Consider the example shown in Fig.\ref{fig:overview}(a): an attacker commands a large number of compromised hosts to send
traffic to a victim $V$ (say a webserver), thus exhausting the resources of $V$ and preventing it from serving its legitimate clients.
 The ISP of $V$ tries to protect its client by blocking the attack at the gateway router $G$.
Ideally, $G$ should install one separate filter to block traffic from each attack source. However, there are typically fewer filters
than attack sources, hence aggregation is used, \ie a single filter (ACL) is used to block an entire source address prefix.
This has the desired effect of reducing the number of filters necessary to
block all attack traffic, but also the undesired effect of blocking legitimate traffic originating from the blocked prefixes
(we will call the damage that results from blocking legitimate traffic ``collateral damage'').
Therefore, filter selection can be viewed as an optimization problem that tries to block as many attack sources
with as little collateral damage as possible, given a limited number of filters. Furthermore, several measurement studies have demonstrated that malicious sources exhibit temporal and spatial clustering  \cite{uncleanness, clustering, mao2006analyzing, ramachandran2006understanding, venkataraman2007exploiting,xie2007dynamic,zhang-highly}, a feature that can be exploited by prefix-based filtering.

\begin{figure*}[t!]
\centering
\subfigure[Actual network.]
{\includegraphics[width=1.5in]{./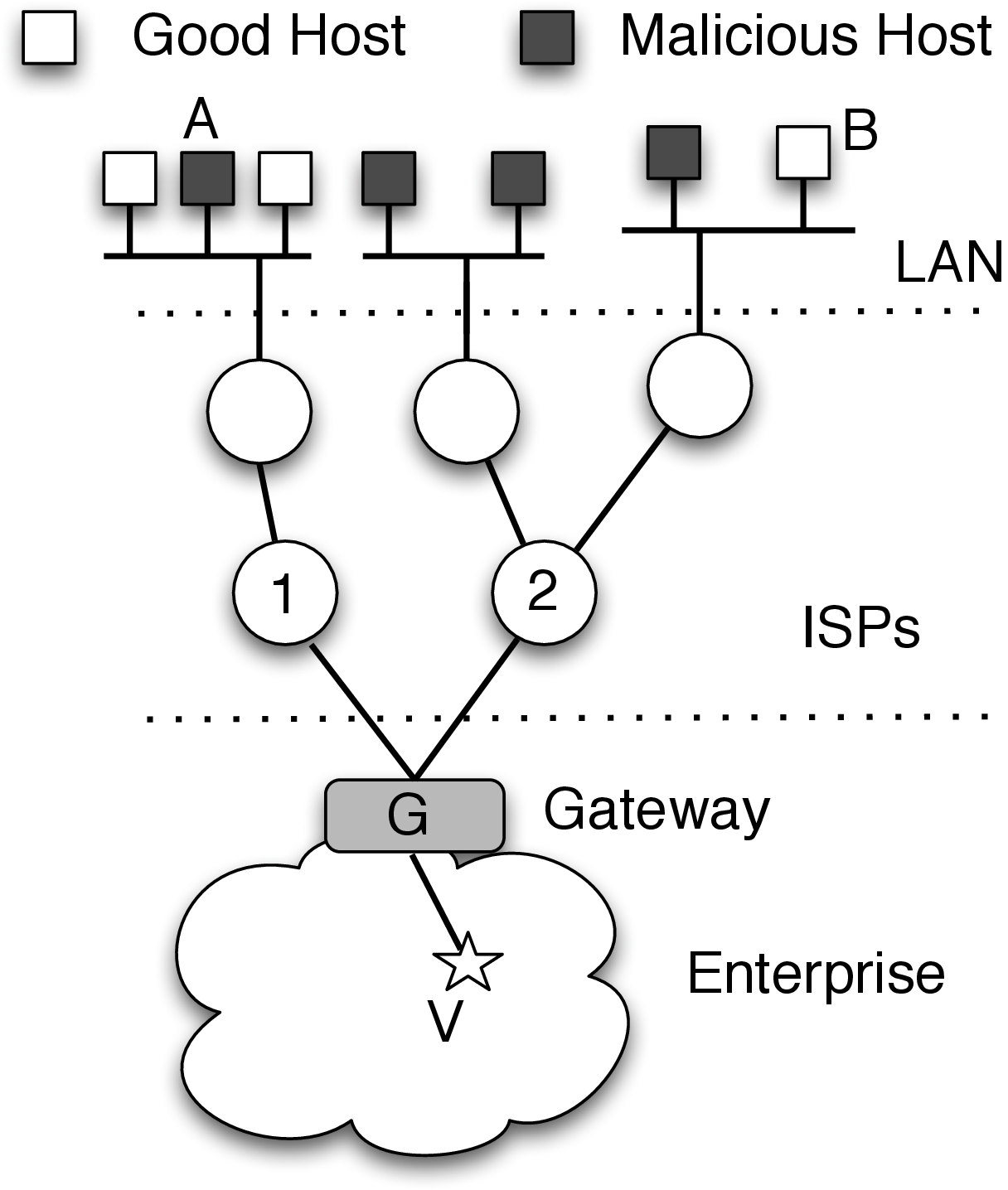}}
\hspace{50pt}
\subfigure[Hierarchy of source IP addresses and prefixes]
{\includegraphics[width=2.2in]{./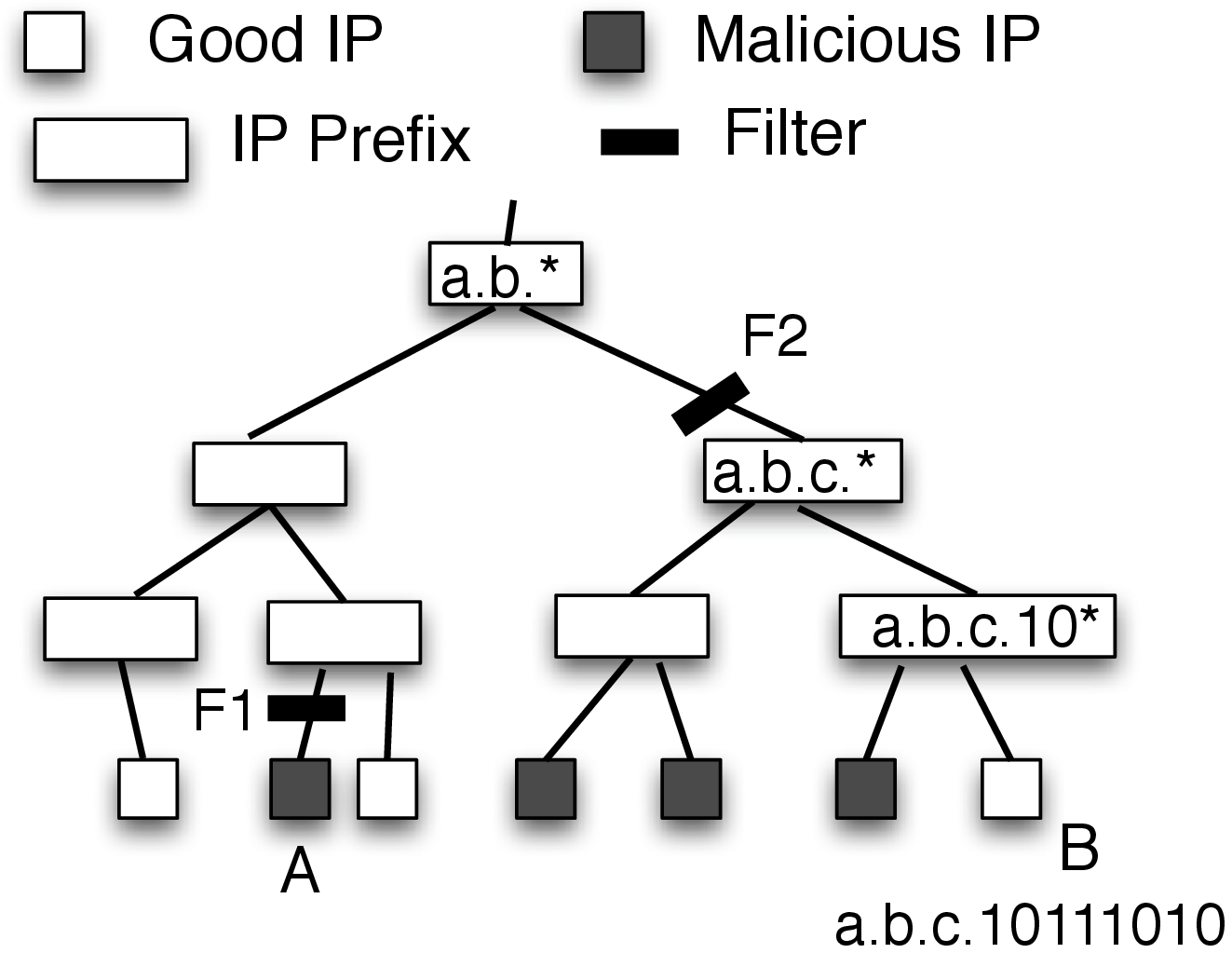}}
\caption{Example of a distributed attack. Let's assume that the gateway router $G$ has only two filters, $F1$ and $F2$, available to block malicious traffic and protect the victim $V$. It uses $F1$ to block a single malicious source address (A) and $F2$ to block the entire source prefix $a.b.c.*$, which contains 3 malicious sources but also one legitimate source (B). Therefore, the selection of filter $F2$ trades-off collateral damage (blocking B) for  reduction in the number of filters (from 3 to 1). We note that both filters, $F1$ and $F2$, are ACLs installed at the same router $G$.
\label{fig:overview}}
\end{figure*}

In this paper, we formulate a general framework for studying source prefix filtering as a
resource allocation problem. To the best of our knowledge, optimal filter selection has not
been explored so far, as most related work on filtering has focused on protocol and architectural aspects.
Within this framework, we formulate and solve five practical source-address filtering problems,
depending on the attack scenario and the operator's policy and constraints.
 Our contributions are twofold.
 On the theoretical side, filter selection optimization leads to novel variations of the multidimensional knapsack problem.
We exploit the special structure of each problem, and design optimal and computationally efficient algorithms.
 On the practical side, we provide a set of cost-efficient algorithms that
 can be used both by operators to block undesired traffic and by router manufacturers to optimize the
 use of TCAM and eventually the cost of routers.
 We used logs from {\tt Dshield.org} to demonstrate that optimally selecting which source prefixes to filter brings significant benefits compared to non-optimized filtering or to generic clustering algorithms.

 The outline of the rest of the paper is as follows. In Section \ref{sec:formulation}, we formulate the general framework for optimal source prefix filtering. In Section \ref{sec:algorithms}, we study five specific problems that correspond to different attack scenarios and operator policies: blocking all addresses in a blacklist (BLOCK-ALL); blocking some addresses in a blacklist (BLOCK-SOME);  blocking all/some addresses in a time-varying blacklist (TIME-VARYING BLOCK-ALL/SOME); blocking flows during a DDoS flooding attack to meet bandwidth constraints (FLOODING); and distributed filtering across several routers during flooding (DIST-FLOODING). For each problem, we design an optimal, yet computationally efficient, algorithm to solve it. 
  In Section \ref{sec:simulations}, we use data from {\tt Dshield.org} \cite{dshield} to evaluate the performance of our algorithms in realistic attack scenarios and we demonstrate that they bring significant benefit in practice.
Section \ref{sec:related} discusses related work and puts our work in perspective.
Section \ref{sec:conclusion} concludes the paper.

\section{Problem Formulation and Framework}
\label{sec:formulation}

\subsection{Terminology and Notation}


Table \ref{tab:booktabs} summarizes our terminology and notation.

\subsubsection*{Source IP Addresses and Prefixes}
Every IPv4  address $ip$ is a 32-bit sequence.
We use standard IP/mask notation, \ie we write $p/l$  to indicate a prefix $p$ of length $l$ bits, where $p$ and $l$ can take values $l=0,1,...,32$ and $p=0,1,...,2^l-1$, respectively.
For brevity, when the meaning is obvious from the context, we simply write $p$ to indicate prefix $p/l$.
We write $ip \in p/l$ to indicate that address $ip$ is within the $2^{32-l}$ addresses covered by prefix $p/l$.

\subsubsection*{Blacklists and Whitelists}
A blacklist ($\mathcal{BL}$) is a set of unique source IP addresses that send bad (undesired) traffic to the victim.
Similarly, a whitelist ($\mathcal{WL}$) is a set of unique source IP addresses that send good (legitimate) traffic to the victim.
An address may belong either to a blacklist (in which case we call it a ``bad'' address) or a to whitelist (in which case we call it a ``good'' address), but not to both.
We use $|\mathcal{BL}|$ and  $|\mathcal{WL}|$ to indicate the number of addresses in $\mathcal{BL}$ and $\mathcal{WL}$, respectively. For brevity, we also use $N=|\mathcal{BL}|$ for the number of addresses in the blacklist, which is the size of the most important input to our problem.

Each address $ip$ in a blacklist or a whitelist is assigned a weight $w_{ip}$, indicating the importance of that address.
If $ip$ is a bad address, we assign it a negative weight $w_{ip} \le 0$, which indicates the benefit from blocking $ip$;
if $ip$ is a good address, we assign it a positive weight $w_{ip}\geq 0$, which indicates the damage from blocking $ip$.
The higher the absolute value of the weight, the higher the benefit or damage and thus the preference to block the address or not

The weight $w_{ip}$ can have a different interpretation depending on the filtering problem.
For instance, it can represent the amount of bad/good traffic originating from the corresponding source address,
or it can express policy: depending on the amount of money gained/lost by the ISP when blocking source address $ip$,
an ISP operator can assign large positive weights to its important customers that should never be blocked,
or large negative weights to the worst attack sources that must definitely be blocked.

Creating blacklists and whitelists (\ie identifying bad and good addresses and assigning appropriate weights to them)
is a difficult problem on its own right, but orthogonal to this work.
We assume that the blacklist $\mathcal{BL}$ is provided by another module (\eg an intrusion
detection system or historical data) as input to our problem. The sources of legitimate traffic are also assumed
known: \eg  web servers or ISPs typically keep historical data and know their customers. If it is not explicitly given, we take a conservative approach and define the whitelist $\mathcal{WL}$ to include all addresses that are not in $\mathcal{BL}$. 

\subsubsection*{Filters}
We focus on filtering of source address prefixes.
In our context, a filter is an ACL rule that specifies that all packets with a source IP address in prefix $p/l$ should be blocked.
$F_{max}$ is the maximum number of available filters, and it is given as input to our problem.
  Filter optimization is meaningful only when $F_{max}$ is much smaller than the size of the blacklist $N=|\mathcal{BL}|$; otherwise the optimal would be to block every single bad address. $F_{max}<<N$ is indeed the case in practice due to the size and cost of the TCAM, as mentioned in Section~\ref{sec:intro}.

The decision variable  $x_{p/l}\in \{1,0\}$ is $1$ if a filter is assigned to block prefix $p/l$; or $0$  otherwise.
A filter $p/l$ blocks all $2^{32-l}$ addresses in that range.
Hence, $b_{p/l}=|\sum_{ip\in p/l \cap \mathcal{BL}} w_{ip}|$ expresses the benefit from filter $p/l$,
whereas $g_{p/l}=\sum_{ip\in p/l \cap \mathcal{WL}} w_{ip}$ expresses the collateral damage it causes.
An effective filter should have a large benefit $b_{p/l}$ and low collateral damage  $g_{p/l}$.


\subsubsection*{Collateral Damage and Benefit}
We define the collateral damage of a filtering solution as
$\sum_{p/l} \sum_{ip \in p/l \cap \mathcal{WL}} w_{ip} \cdot x_{p/l}$,
\ie the sum of the weights of the good addresses whose traffic is blocked.
We define the filtering benefit as
$\sum_{p/l} \sum_{ip \in p/l \cap \mathcal{BL}} w_{ip} \cdot x_{p/l}$,
\ie the sum of the weights of the bad addresses whose traffic is blocked.

\subsection{Rationale and Overview of Filtering Problems}

Given a set of bad and a set of good source addresses ($\mathcal{BL}$ and $\mathcal{WL}$),
a measure of their importance (the address weights $w$), and a resource budget ($F_{max}$ plus, possibly, other resources,
depending on the particular problem), the goal is to select which source prefixes to filter so as to
minimize the impact of bad traffic and can be accommodated with the given resource budget.
Different variations of the problem can be formulated, depending on the attack scenario and the victim network's policies and constraints: the network operator may want to block all bad addresses or tolerate to leave some unblocked;
the attack may be of low rate or a flooding attack; filters may be installed at one or several routers.

\begin{table}[t!]
\setlength{\tabcolsep}{5pt}		
   \centering
   \begin{tabular}{@{} |c|c| } 
	\hline
	$ip$ &  Generic IP address \\
	\hline
     $w_{ip}$ & Weight assigned to address $ip$\\
      \hline
      $\mathcal{BL}$ & Blacklist: a list of bad addresses\\
      \hline
      $N = |\mathcal{BL}| $ & Number of unique addresses in $\mathcal{BL}$\\
      \hline
      $\mathcal{WL} $ 
          & Whitelist: a set of ``good'' addresses \\
        \hline
    $p/l$ (or ``$p$'' for short) & Prefix $p$ of length $l$ bits (IP/mask notation) \\
    \hline
     $ip \in p/l$ &  IP  address that belongs to prefix $p/l$ \\
    \hline
    $x_{p/l}\in \{1,0\}$ & Indicates if a filter blocks prefix $p/l$ or not \\
    \hline
    $ g_{p/l}=\sum_{ip\in p/l \cap \mathcal{WL}} w_{ip} $  & Collateral damage from filtering prefix $p/l$\\
     \hline
     $ b_{p/l}=|\sum_{ip\in p/l \cap \mathcal{BL}} w_{ip}|  $  & Bad traffic blocked by filtering prefix $p/l$ \\
      \hline
      $F_{max}~(<<N)$ & Maximum number of available filters\\
     \hline
     $z_p(F)$        & Value of the optimal solution of subproblem  \\
      ~                     & considering only addresses in prefix $p$  \\
      			 & and up to $F$ filters \\
     \hline
     $X_p(F)$        & Set of filters used in optimal solution $z_p(F)$\\
    \hline
   \end{tabular}
   \caption{Summary of Notation and Terminology}
   \vspace{-10pt}
   \label{tab:booktabs}
\end{table}

At the core of each filtering problem lies the following optimization problem:
\begin{align}
\label{P-GENERAL} \min \sum_{p/l} \sum_{ip \in p/l} w_{ip}  x_{p/l} & \\
\label{P-cardinality} \text{s.t.~} \sum_{p/l} x_{p/l} & \leq F_{max}\\
\label{P-coverage} \sum_{p/l: ip \in p/l} x_{p/l}  & \leq 1\quad \forall ip\in \mathcal{BL}\\
\label{P-domain} x_{p/l} \in \{0,1\} &	\quad  \forall l=0,...,32, p=0,...,2^l  
\end{align}

Eq. (\ref{P-GENERAL}) expresses the objective to minimize the total cost of bad traffic,
which consists of two parts: the collateral damage (the terms with $w_{ip}>0$) and the cost of
leaving bad traffic unblocked (the terms with $w_{ip} < 0$).
We use notation $\sum_{p/l}$ to denote summation over all possible prefixes $p/l$: $l=0,...,32$, $p=0,...,2^l-1$.
Eq. (\ref{P-cardinality}) expresses the constraint on the number of filters.
Eq. (\ref{P-coverage}) states that overlapping filters are mutually exclusive, \ie each bad address can be blocked
at most once, otherwise filtering resources are wasted.
Eq. (\ref{P-domain}) lists the decision variables $x_{p/l}$ corresponding to all possible prefixes,
 and will be omitted from now on for brevity.

Eq. (\ref{P-GENERAL})--(\ref{P-domain}) provide the general framework for filter-selection optimization. Different
  filtering problems can be written as special cases, possibly with additional constraints.
As we discuss in Section \ref{sec:related}, these are all multi-dimensional knapsack problems \cite{KPbook},
which are, in general, NP-hard. The specifics of each problem affect dramatically the complexity, which can vary from linear to NP-hard.

In this paper, we formulate five practical filtering problems and develop optimal, yet computationally efficient algorithms to solve them.
Here, we summarize the rationale behind each problem and we outline our main results; the exact formulation and detailed solution
is provided in Section~\ref{sec:algorithms}.

{\bf {\em BLOCK-ALL}:} Suppose a network operator has a blacklist $\mathcal{BL}$ of size $N$, a whitelist $\mathcal{WL}$,
and a weight assigned to each address that indicates the amount of traffic originating from that address.
The total number of available filters is $F_{max}$.
The first practical goal the operator may have is to install a set of filters that block {\em all} bad traffic
so as to minimize the amount of good traffic that is blocked.
We design an optimal algorithm that solves this problem at the lowest achievable complexity (linearly increasing with $N$).

{\bf {\em BLOCK-SOME}:} A blacklist and a whitelist are given, as before, but the
operator is now willing to block {\em only some}, instead of all, bad traffic,
so as to decrease the amount of good traffic blocked at the expense of leaving some bad traffic unblocked.
The goal now is to block only those prefixes that have the highest impact and do not contain sources that generate a lot of traffic, so as to minimize the total cost in Eq. (\ref{P-GENERAL}).
We design an optimal, lowest-complexity (linearly increasing with $N$) algorithm for this problem, as well.

{\bf {\em TIME-VARYING BLOCK-ALL (SOME)}:}
Bad addresses may change over time~\cite{clustering}: new sources may send malicious traffic and conversely, previously active sources may disappear (\eg when their vulnerabilities are patched).
One way to solve the dynamic versions of BLOCK-ALL (SOME) is to run the algorithms we propose for the static versions for the blacklist/whitelist pair at each time slot.
However, given that subsequent blacklists typically exhibit significant overlap \cite{clustering}, it may be more efficient to exploit this temporal correlation
and incrementally update the filtering rules.
We show that is it possible to update the optimal solution, as new IPs are inserted in or removed from the blacklist, in $\log N$ time.

{\bf  {\em FLOODING}:} In a  flooding attack, such as the one shown in Fig.~1,
a large number of compromised hosts send traffic to the victim and exhaust the victim's access bandwidth.
In this case, our framework can be used to select the filtering rules that minimize the amount of good traffic that is blocked
while meeting the access bandwidth constraint -- in particular, the total bandwidth consumed by the unblocked traffic should not exceed
the bandwidth of the flooded link, \eg link G-V in Fig.~1.
We prove that this problem is NP-hard and we design a pseudo-polynomial algorithm that solves it optimally,
with complexity that grows linearly with the blacklist and whitelist size, \ie $|\mathcal{BL}|+|\mathcal{WL}|$.


{\bf {\em DIST-FLOODING}:}
All the above problems aim at installing filters at a single router.
However, a network operator may use the filtering resources collaboratively across {\em several routers} to better defend against an attack.
Distributed filtering
may also be enabled by the cooperation across several ISPs against a common enemy.
The question in both cases is not only which prefixes to block, but also at which router to install the filters.
We study the practical problem of distributed filtering against a flooding attack.
We prove that the problem can be decomposed into several FLOODING problems, which can be solved in a distributed way.

\section{\label{sec:algorithms}Filtering Problems and Algorithms}

In this section, we provide the detailed formulation of each problem and we present the algorithm that solves it. We start by defining the data structure that we use to represent the problem and to develop our algorithms.

\subsection{A Data Structure for Representing Filtering Solutions}

\begin{definition}[LCP Tree]
Given a set of addresses $\mathcal A$, we define {\em the Longest Common Prefix tree} of $\mathcal A$, denoted by LCP-tree($\mathcal A$),
as the binary tree with the following properties:
(i) each leaf represents a different address in $\mathcal{A}$ and there is a leaf for each address in $\mathcal{A}$;
(ii) each intermediate (non-leaf) node represents the longest common prefix between the prefixes represented by its two children.
\end{definition}

The LCP-tree($\mathcal{A}$) can be constructed from the complete binary tree (with root  leaves at level 32 corresponding to all addresses $[0,....,2^{32}-1]$, and intermediate nodes at level $i=1,..32$ corresponding to all prefixes of length $i$) by removing the branches that do not have addresses in $\mathcal{A}$, and then by removing nodes with a single child.
It is a variation of the binary (or unibit) trie~\cite{varghese}, but does not have nodes with a single child.
The LCP-tree($\mathcal{A}$) offers an intuitive way to represent sets of prefixes that can block the addresses in set $\mathcal{A}$: each node in the LCP tree represents a prefix that can be blocked, hence we can represent a filtering solution as the pruned version of the LCP tree, whose leaves are all and only the blocked prefixes.

\begin{example}
For instance, consider the LCP tree depicted in Figure~\ref{fig:P3-proof}, whose leaves correspond to bad addresses that we want to block. One (expensive) solution is to use one filter to block each bad address; thus the LCP tree is not pruned and its leaves correspond to the filters. Another feasible solution is to use three filters and block traffic from prefixes $0/1$, $8/2$, and $12/4$; this can be represented
by the pruned version of the LCP tree that includes the
aforementioned prefixes as leaves. Yet another (rather radical) solution is to filter a single prefix ($0/0$) to  block all traffic; this can be represented by the pruned version of the LCP tree that includes only its root.
\end{example}.

\begin{figure}
\centering
\includegraphics[width=3.5in, angle=0]{./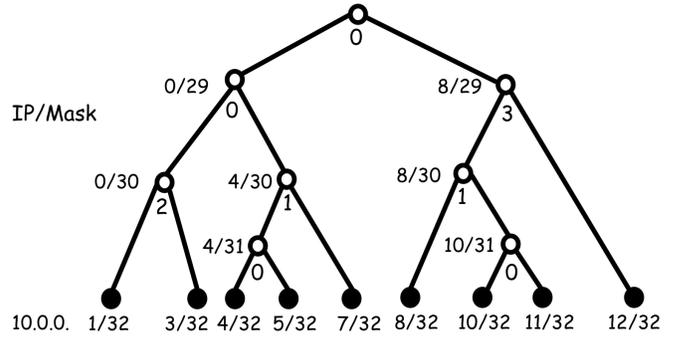}
\caption{\label{fig:P3-proof} {\small {\bf Example of LCP-tree($\mathcal{BL}$).}
Consider a blacklist consisting of the following 9 bad addresses:
$\mathcal{BL}=\{10.0.0.1,10.0.0.3,10.0.0.4,10.0.0.5,10.0.0.7,10.0.0.8,10.0.0.10$, $10.0.0.11,10.0.0.12\}$. All remaining addresses are considered good.
Each leaf represents one address in the $\mathcal{BL}$.
Each intermediate node represents the longest common prefix $p$ covering all bad addresses in that subtree.
At each intermediate node $p$, we also show the collateral damage (\ie number of good addresses blocked) when we filter prefix $p$ instead of filtering each of its children.
{\em E.g.,} if we use two filters to block bad addresses 10.0.0.5/32 and 10.0.0.7/32 the collateral damage is 0; if, instead, we use one filter to block prefix 10.0.0.4/30, we also block good address 10.0.0.6/4, \ie we cause collateral damage 1.}}
\end{figure}



\subsubsection*{Complexity}
Given a list of addresses $\mathcal{A}$, we can build  the LCP-tree($\mathcal{A}$) by performing $|\mathcal{A}|$ insertions in a Patricia trie~\cite{varghese}.
To insert a string of $m$ bits, we need at most $m$ comparisons. Thus, the worst-case complexity is $O(m|\mathcal{A}|)$, where $m=32$ (bits) is the  length of a 32-bit IPv4 address.

\subsection{BLOCK-ALL}
\label{sec:block-all}

\subsubsection*{Problem Statement}
Given: a blacklist $\mathcal{BL}$, a whitelist $\mathcal{WL}$, and the number of available filters $F_{max}$;
select filters that block {\em all} bad traffic and minimize collateral damage.

\subsubsection*{Formulation}
We formulate this problem by making two adjustments to the general framework of Eq. (\ref{P-GENERAL})--(\ref{P-domain}).
First, Eq.~(\ref{P-GENERAL}) becomes Eq.~(\ref{P1-OF}) below, which expresses the goal to minimize the collateral damage. 
Second, Eq.~(\ref{P-coverage}) becomes Eq. (\ref{P1-coverage}) below,
which enforces the constraint that every bad address should be blocked by exactly one filter,
as opposed to at most one filter in Eq.(\ref{P-coverage}).
\begin{align}
\label{P1-OF} \min \sum_{p/l} g_{p/l} x_{p/l} \\
\label{P1-cardinality} \text{s.t.}  \quad \sum_{p/l} x_{p/l} &\leq F_{max} \\
\label{P1-coverage} \sum_{p/l: ip \in p/l} x_{p/l}   &= 1 & \quad \forall ip\in \mathcal{BL}
\end{align}

\subsubsection*{Characterizing an Optimal Solution}
Our algorithm starts from LCP-tree($\mathcal{BL}$) and outputs a pruned version of that LCP tree.
Hence, we start by proving that an optimal solution to BLOCK-ALL can indeed be represented as a pruned version of that LCP tree.

\begin{proposition}\label{prop:all}
An optimal solution to BLOCK-ALL can be represented as a pruned
subtree of LCP-tree($\mathcal BL$) with the same root as LCP-tree($\mathcal{BL}$), up to $F_{max}$ leaves,
and each non-leaf node having exactly two children.
\end{proposition}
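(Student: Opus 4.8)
The plan is to start from an arbitrary optimal solution of BLOCK-ALL and transform it, step by step, into a solution that has the claimed structural form, without increasing the collateral damage or the number of filters used. Let $\{x_{p/l}\}$ be an optimal solution, and let $S$ be the set of prefixes it filters. By the coverage constraint Eq.~(\ref{P1-coverage}), every bad address is covered by exactly one prefix in $S$, so the prefixes in $S$ are pairwise disjoint and their union contains $\mathcal{BL}$.

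First I would argue that each filtered prefix can be ``shrunk'' to a node of the LCP tree. Take any $p/l \in S$. Among the bad addresses it covers, consider their longest common prefix $q/m$; by definition $q/m$ is a node of LCP-tree($\mathcal{BL}$), it still covers exactly the same set of bad addresses as $p/l$ (since those addresses all lie in $q/m$, and $q/m \subseteq p/l$), and $g_{q/m} \le g_{p/l}$ because shrinking a prefix can only remove good addresses from it. Replacing $p/l$ by $q/m$ therefore keeps the solution feasible (coverage is preserved, cardinality unchanged) and does not increase the objective. After doing this for every prefix in $S$, every filtered prefix is a node of LCP-tree($\mathcal{BL}$); moreover, since the filtered prefixes partition $\mathcal{BL}$, the corresponding nodes are an antichain whose subtrees cover all leaves, i.e.\ they form the leaf set of a pruned subtree of LCP-tree($\mathcal{BL}$) sharing its root. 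This pruned subtree automatically has the property that every internal node has exactly two children, since LCP-tree($\mathcal{BL}$) itself has that property and we only prune whole subtrees. The bound on the number of leaves is just Eq.~(\ref{P1-cardinality}), $|S| \le F_{max}$.

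The one subtlety — and the step I expect to require a little care — is the case where the optimal $S$ does not actually cover all of $\mathcal{BL}$ with disjoint prefixes in a way that reaches the root, or uses fewer than the natural number of filters: I would note that leaving a bad address uncovered violates Eq.~(\ref{P1-coverage}), so coverage is not optional, and that if some subtree of LCP-tree($\mathcal{BL}$) is "split" by $S$ into sibling pieces one could always also consider merging, but merging is not needed for the proposition — we only need existence of \emph{an} optimal solution of the stated form, and the shrink-to-LCP-node argument already delivers one. Finally I would observe that a pruned subtree with $k$ leaves, $k \le F_{max}$, is a perfectly valid filtering solution using $k \le F_{max}$ filters, so the transformed solution is feasible; since its cost did not increase, it is also optimal. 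This establishes the proposition.
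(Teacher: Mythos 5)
Your strategy is essentially the paper's: map each filtered prefix onto a node of LCP-tree($\mathcal{BL}$) without increasing collateral damage, then read off the pruned-subtree structure; your one-shot replacement of a filtered prefix by the longest common prefix of the bad addresses it covers is a compressed version of the paper's case analysis. However, two steps need repair. First, your claim that Eq.~(\ref{P1-coverage}) makes the filtered prefixes pairwise disjoint, and the shrink step itself, implicitly assume that every filtered prefix contains at least one bad address. Nothing in the formulation forbids a filter on a prefix with no bad addresses (it can even sit inside another filtered prefix, and can belong to an optimal solution when its collateral damage is zero), and for such a prefix ``the LCP of the bad addresses it covers'' is undefined. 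You need the preliminary move the paper makes explicitly (its case of a prefix containing no bad addresses): discard every such filter, which preserves feasibility and cannot increase the objective because the weights $w_{ip}$ of good addresses are nonnegative; only after that are the remaining filtered prefixes genuinely disjoint.

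Second, your justification of the two-children property --- that LCP-tree($\mathcal{BL}$) has it and ``we only prune whole subtrees'' --- is not a valid reason: pruning the entire subtree of one child while keeping the other is exactly how an internal node with a single child would arise. The property really comes from the coverage constraint: if an internal node of the pruned subtree had only one child present, every bad address in the missing child's subtree would be blocked by no selected prefix (the selected prefixes form an antichain, so none lies at or above that internal node), violating Eq.~(\ref{P1-coverage}). This is precisely the contradiction argument in the paper's proof, and it also follows from your own observation that the selected nodes are an antichain whose subtrees cover all leaves --- you just need to invoke that fact rather than the pruning remark.
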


\begin{proof}
We prove that, for each feasible solution to BLOCK-ALL $S$, there exists another feasible solution $S'$
that (i) can be represented as a pruned subtree of LCP-tree($\mathcal{BL}$) as described
in the proposition and (ii) whose collateral damage is smaller or equal to $S$'s.
This is sufficient to prove the proposition, since an optimal solution is also a feasible one.

Any filtering solution can be represented as a pruned subtree of
the full binary tree of all IP addresses (LCP-tree$\left( \{0, 1, ..., 2^{32}-1 \} \right )$)
with the same root and leaves corresponding to the filtered prefixes.
$S$ is a feasible solution to BLOCK-ALL, therefore 
$S$ uses up to $F_{max}$ filters, \ie its tree has up to $F_{max}$ leaves. Indeed, if this was not the case,
Eq.~(\ref{P1-cardinality}) would be violated and $S$ would not be a feasible solution.

Let us assume that the tree  representing $S$ includes a prefix $\tilde p$ that is {\em not} in LCP-tree($\mathcal{BL}$). There are three possible cases:
\begin{enumerate}
\item $\tilde p$ includes no bad addresses.
In this case, we can simply remove $\tilde p$ from $S$'s tree (\ie, unblock $\tilde p$).
\item Only one of $\tilde p$'s children includes bad addresses.
In this case, we can replace $\tilde p$ with the child node.
\item Both of $\tilde p$'s children contain bad addresses. In this case, $\tilde p$ is already the longest common prefix of all bad addresses in $\mathcal{BL}\cap \tilde p$, thus already on the LCP-tree($\mathcal{BL}$).
\end{enumerate}
Clearly, each of these operations transforms feasible solution $S$, which is assumed not to be on LCP-tree($\mathcal{BL}$), into another feasible solution $S'$ with smaller or equal collateral damage but on the LCP-Tree($\mathcal{BL}$). We can repeat this process for all prefixes that are in $S$'s tree but not in LCP-tree($\mathcal{BL}$), until we create a feasible solution $S'$ that includes only prefixes from LCP-tree($\mathcal{BL}$) and has smaller or equal collateral damage.

The only element missing to prove the proposition is to show that, in the pruned LCP subtree that represents $S'$,
each non-leaf node has exactly two children.
We show this by contradiction:
Suppose there exists a non-leaf node in our pruned LCP subtree that has exactly one child.
This can only result from pruning out one child of a node in the LCP tree.
This means that all the bad addresses (leaves) in the subtree of this child node remain unfiltered, which violates Eq.~(\ref{P1-coverage}); but this is a contradiction because  $S'$ is a feasible solution.
\end{proof}

%


\subsubsection*{Algorithm} Algorithm \ref{Alg:P1}, which solves BLOCK-ALL, consists of two steps. First, we build the LCP tree from the input
blacklist $\mathcal{BL}$. Second, in a bottom-up fashion, we compute $z_p(F )\ \forall p,F$, \ie, the minimum collateral damage needed to block all
bad addresses in the subtree of prefix $p$ using at most $F$ filters.
Following a dynamic programming (DP) approach,  we can find the optimal allocation of filters in the subtree rooted at prefix $p$,
by finding a value $n$ and assigning $F-n$ filters to the left subtree and $n$ to the right subtree, so as to minimize collateral damage.
The fact that BLOCK-ALL needs to filter all bad addresses (leaves in the LCP tree) implies that at least one filter must be assigned to the left
and right subtree, \ie $n=1,2,...,F-1$.
In other words, for every pair of sibling nodes, $s_l$ (left) and $s_r$ (right), with common parent node $p$,  the following recursive equation holds:
\begin{align}
\label{eq:block-all} z_p(F) &=  \min_{n=1,..., F-1} \Big\{ z_{s_l}(F-n) +  z_{s_r}(n) \Big\}, ~F>1
\end{align}
with boundary conditions for leaf and intermediate nodes:
\begin{align}
\label{conds:block-all}
z_{leaf}(F) & = 0   \  \ \forall F\geq 1  \\
 z_p(1)  & = g_{p} \  \ \forall p
\end{align}
Once we compute $z_p(F)$ for all prefixes in the LCP tree, we simply read the value of the optimal solution, $z_{root}(F_{max})$.
We also use auxiliary variables $X_p(F)$ to keep track of the set of prefixes used in the optimal solution.
In lines 4 and 10 of Algorithm 1,  $X_p(F)$ is initialized to the single prefix used.
In line 12, after computing the new cost, the corresponding set of prefixes is updated: $X_p(F) = X_{s_l}(F-n) \cup X_{s_r}(n)$.

\begin{algorithm}[t!]
 \caption{\label{Alg:P1} Algorithm forsolving  BLOCK-ALL}
\begin{algorithmic}[1]
\begin{footnotesize}
\STATE build LCP-tree($\mathcal{BL}$)
\FOR{ all leaf nodes $leaf$}
	\STATE{ $z_{leaf}(F) = 0 \ \forall F\in[1, F_{max}]$ }
	\STATE{$X_{leaf}(F) = \{leaf\} \ \forall  F\in[1, F_{max}]$ }
\ENDFOR
\STATE level = level(leaf)-1
\WHILE{$level \geq level(root)$} \label{line:while}
    \FOR{all node  $p$ such that $level(p)==level$ }
	\STATE $z_p(1) = g_p$
	\STATE $X_p(1) = \{p\}$
	\STATE $z_p(F)=\min_{n=1,..F-1} \Big\{ z_{s_l}(F-n)+ z_{s_r}(n) \Big\} \forall F\in[2,F_{max}]$
   	\STATE $X_p(F) = X_{s_l}(F-n) \cup X_{s_r}(n) \forall F\in[2, F_{max}]$
    \ENDFOR
    \STATE level = level - 1
\ENDWHILE
\RETURN $z_{root}(F_{max})$, $X_{root}(F_{max})$
\end{footnotesize}
\end{algorithmic}
\end{algorithm}

\begin{theorem}
Algorithm~1 computes the optimal solution of problem BLOCK-ALL: the prefixes that are contained in set $X_p(F)$ are the optimal
$x_{p/l}=1$ for Eq. (\ref{P1-OF})--(\ref{P1-coverage}).
\end{theorem}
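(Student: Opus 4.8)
The plan is to establish correctness of Algorithm~1 in two stages. First, by Proposition~\ref{prop:all}, we may restrict attention to solutions that are pruned subtrees of LCP-tree($\mathcal{BL}$) sharing its root, with at most $F_{max}$ leaves, and with every internal node having exactly two children; this reduces the question to showing that the dynamic program in Eq.~(\ref{eq:block-all})--(\ref{conds:block-all}) correctly computes, for every prefix $p$ in the tree and every $F\in[1,F_{max}]$, the minimum collateral damage $z_p(F)$ achievable by a feasible sub-solution that blocks all bad addresses in the subtree rooted at $p$ using at most $F$ filters. Second, I would argue that reading off $z_{root}(F_{max})$ and the accompanying witness set $X_{root}(F_{max})$ yields an optimal solution to Eq.~(\ref{P1-OF})--(\ref{P1-coverage}).

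The core of the argument is an induction on the height of $p$ in LCP-tree($\mathcal{BL}$). For the base case, a leaf node $p$ is a single bad address; blocking it with its own $/32$ filter incurs zero collateral damage (by definition a leaf of the LCP tree contains no good address in the filtered prefix beyond… well, it is a single IP), so $z_{leaf}(F)=0$ for all $F\ge1$, matching Eq.~(\ref{conds:block-all}). For the inductive step, fix an internal node $p$ with children $s_l,s_r$. Any feasible sub-solution for $p$ using exactly one filter must block the single prefix $p$ itself (since its subtree contains bad addresses in both children, no smaller single prefix covers them all), giving cost $g_p$; this is the $z_p(1)=g_p$ boundary. For $F>1$, I would show the two inequalities. ($\le$) Given optimal filter counts $n$ for $s_r$ and $F-n$ for $s_l$ realizing the right-hand side of Eq.~(\ref{eq:block-all}), the union of the corresponding sub-solutions is feasible for $p$ (it blocks all bad addresses in $p$'s subtree, uses $\le F$ filters, and the two prefix sets are disjoint since $s_l$ and $s_r$ partition the address space below $p$), and its collateral damage is exactly the sum of the two sub-costs because the good addresses blocked on the left and right sides are likewise disjoint. ($\ge$) Conversely, take any optimal feasible sub-solution $T$ for $p$ with $F$ filters. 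By Proposition~\ref{prop:all}'s structural conclusion applied to this subtree, $T$ is a pruned subtree with $p$ as root and every internal node binary; in particular $T$ is \emph{not} the single prefix $p$ (that was handled by the $F=1$ boundary, and for $F>1$ the optimum is no worse), so $p$ has both children present in $T$. Hence $T$ decomposes uniquely into a sub-solution $T_l$ rooted at $s_l$ and $T_r$ rooted at $s_r$, say using $k$ and $F'-k$ filters with $F'\le F$; each $T_i$ is feasible for $s_i$, so its cost is at least $z_{s_l}(k)$, resp.\ $z_{s_r}(F'-k)$, by the induction hypothesis. Since both $z$-functions are nonincreasing in the filter budget (more filters can only help — pad with a redundant child split, or simply note any $F$-solution is an $F{+}1$-solution), the total is at least $\min_{n=1,\dots,F-1}\{z_{s_l}(F-n)+z_{s_r}(n)\}=z_p(F)$, where $n=F'-k\ge1$ and $F-n\ge k\ge1$ because both children must receive at least one filter. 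Combining the two inequalities gives Eq.~(\ref{eq:block-all}).

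Finally, the bookkeeping variables $X_p(F)$ are updated in lockstep with $z_p(F)$ (lines~4, 10, 12), so a trivial parallel induction shows $X_p(F)$ is always a set of prefixes realizing the value $z_p(F)$; in particular $X_{root}(F_{max})$ is a feasible solution of Eq.~(\ref{P1-OF})--(\ref{P1-coverage}) with objective value $z_{root}(F_{max})$, which by the above equals the optimum over all pruned LCP subtrees, which by Proposition~\ref{prop:all} equals the optimum over \emph{all} feasible filtering solutions. Hence setting $x_{p/l}=1$ exactly for $p/l\in X_{root}(F_{max})$ is optimal.

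I expect the main obstacle to be the $(\ge)$ direction of the inductive step: one must be careful that an arbitrary optimal sub-solution $T$ really does ``split'' at $p$ into independent left/right pieces whose filter counts are each at least one and sum to at most $F$, and that the collateral-damage objective is additive across this split (no good address is double-counted because the sibling prefixes are disjoint). The monotonicity of $z_p(\cdot)$ in $F$ is the small lemma that glues the ``$\le F$'' slack to the exact recursion with ``$=F$'' on the left-hand side; it is easy but worth stating explicitly so the index range $n\in\{1,\dots,F-1\}$ in Eq.~(\ref{eq:block-all}) is justified rather than merely asserted.
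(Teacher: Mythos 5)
Your proposal is correct and follows essentially the same route as the paper's own proof: Proposition~\ref{prop:all} restricts attention to pruned subtrees of LCP-tree($\mathcal{BL}$), and the theorem then reduces to the correctness of the recursion in Eq.~(\ref{eq:block-all})--(\ref{conds:block-all}), established by the same left/right decomposition with at least one filter per child and recursion down to the leaf/$F=1$ base cases (the paper states this argument informally, whereas you spell out the two inequalities and the monotonicity of $z_p(\cdot)$ explicitly). The only loose end is your parenthetical dismissal of the case where an optimal sub-solution with $F>1$ filters is the single prefix $p$ itself --- Proposition~\ref{prop:all} does not exclude it, but it is disposed of by noting $g_{s_l}+g_{s_r}\le g_p$ together with your monotonicity lemma, so the gap is cosmetic.
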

\begin{proof}
Recall that, $z_{root}(F_{max})$ denotes the value of the optimal solution of BLOCK-ALL with $F_{max}$ filters
(\ie, the minimum collateral damage), while $X_{root}(F_{max})$ denotes the  set of filters selected in the optimal solution.
Let $s_l$ and $s_r$ denote the two children nodes (prefixes) of $root$ in the LCP-tree($\mathcal{BL}$).
Finding the optimal allocation of $F_{max}>1$ filters to block all addresses contained in $root$ (possibly all IP space),
is equivalent to finding the optimal allocation of $x\geq 1$ filters to block all addresses in $s_l$,
and $y\geq 1$ prefixes for bad addresses in $s_r$, such that $x+y = F_{max}$.
This is because prefixes $s_l$, and $s_r$ jointly contain {\em all} bad addresses.
Moreover, each of $s_l$ and $s_r$ contains at least one bad address. Thus, at least one filter must be assigned to each of them.
If $F_{max}=1$, \ie, there is only one filter available, the only feasible solution is to select $root$ as the prefix to filter out.
The same argument recursively applies to descendant nodes, until either we reach a leaf node, or we have only one filter available.
In these cases, the problem is trivially solved by Eq. (\ref{conds:block-all}).
\end{proof}

\subsubsection*{Complexity} The LCP-tree is a binary tree with $|\mathcal{BL}|$ leaves; therefore, it has $O(|\mathcal{BL}|)$ intermediate nodes (prefixes). Computing Eq. (\ref{eq:block-all}) for every node $p$ and for every value $F\in[1, F_{max}-1]$ involves solving $O(|\mathcal{BL}|F_{max})$ sub-problems,
one for every pair ($p$, $F$) with complexity $O(F_{max})$.
$z_p(F)$ in Eq. (\ref{eq:block-all}) requires only the optimal solution at the
sibling nodes, $z(s_l, F-n),  z(s_r, n)$. Thus, proceeding from the leaves to the root, we can compute the optimal solution in
$O(|\mathcal{BL}|F_{max}^2)$. In practice, the complexity is even lower, since we do not need to compute $z_p(F)$ for all values
$F \leq F_{max}$, but only for $F\leq \min\{|leaves(p)|, F_{max}\}$, where $|leaves(p)|$ is the number of the leaves in prefix $p$ in the LCP tree.
Moreover, we only need to compute entries $z_p(F)$ for every prefix $p$, s.t. we cover all addresses in $\mathcal{BL}\cap p$, which may require $F \leq F_{max}$ for long prefixes in the LCP-tree.

Finally, we observe that the asymptotic complexity is $O(|\mathcal{BL}|)$,
since  $F_{max}<<N=|\mathcal{BL}|$ and $F_{max}$ does not depend on $|\mathcal{BL}|$ but only on the TCAM size.
Thus, the time complexity increases linearly with the number of bad addresses $|\mathcal{BL}|$.
This is within a constant factor of the lowest achievable complexity, since we need to read all $|\mathcal{BL}|$ bad addresses at least once.

\subsection{\label{sec:block-some}BLOCK-SOME}

\subsubsection*{Problem Statement}
Given a blacklist $\mathcal{BL}$, a whitelist $\mathcal{WL}$, and the number of available filters $F_{max}$,
the goal is to select filters so as to minimize the total cost of the attack.

\subsubsection*{Formulation}
This is precisely the problem described by Eq. (\ref{P-GENERAL})--(\ref{P-domain}),
but put slightly rephrased to better compare it with BLOCK-ALL.
There are two differences from BLOCK-ALL. First, the goal is to minimize the total cost of the attack, which involves
both collateral damage $g_{p/l}$ and the filtering benefit $b_{p/l}$, which is expressed by Eq. (\ref{P2-OF}).
Second, Eq. (\ref{P2-coverage}) states that every bad address must be filtered by {\em at most} one prefix, which means that it may or may not be filtered.
\begin{align}
\label{P2-OF} \min & \sum_{p/l} \Big(g_{p/l} - b_{p/l}\Big) x_{p/l}&\\
\label{P2-cardinality} \text{s.t.} &\sum_{p/l} x_{p/l} \leq F_{max} &\\
\label{P2-coverage}  &\sum_{p/l: ip\in p/l} x_{p/l} \leq 1\quad &\forall ip\in \mathcal{BL} &&
\end{align}


\subsubsection*{Characterizing an Optimal Solution}
As with BLOCK-ALL, our algorithm starts from LCP-tree($\mathcal{BL}$) and outputs a pruned version of that LCP tree.
The only difference is that some bad addresses may now remain unfiltered.
In the pruned LCP subtree that represents our solution, this means that
there may exist intermediate (non-leaf) nodes with a single child.

\begin{proposition}\label{prop:some}
An optimal solution to BLOCK-SOME can be represented as a pruned subtree of
LCP-tree($\mathcal BL$) with: the same root as LCP-tree($\mathcal BL$) and up to $F_{max}$ leaves.
\end{proposition}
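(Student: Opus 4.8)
The plan is to mirror the structure of the proof of Proposition~\ref{prop:all}, taking advantage of the fact that BLOCK-SOME's feasibility constraint (\ref{P2-coverage}) is strictly weaker than BLOCK-ALL's (\ref{P1-coverage}): each bad address may be covered by at most one filter, rather than exactly one. Concretely, I would show that for every feasible solution $S$ to BLOCK-SOME there is another feasible solution $S'$ that (i) uses no more filters than $S$, (ii) has total cost (in the sense of Eq.~(\ref{P2-OF})) no larger than that of $S$, and (iii) uses only prefixes that appear as nodes of LCP-tree($\mathcal{BL}$). Since the LCP tree has the desired root and any pruned subtree with at most $F_{max}$ leaves corresponds to at most $F_{max}$ filters, establishing (i)--(iii) for an optimal $S$ proves the proposition.

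First I would represent an arbitrary feasible $S$ as a pruned subtree of the full binary tree LCP-tree$\left(\{0,1,\dots,2^{32}-1\}\right)$ whose leaves are the filtered prefixes; feasibility via Eq.~(\ref{P2-cardinality}) gives at most $F_{max}$ such leaves. Suppose $S$'s tree contains a prefix $\tilde p$ not present in LCP-tree($\mathcal{BL}$). As in the earlier proof, exactly one of three cases holds: $\tilde p$ contains no bad address, in which case removing $\tilde p$ (unblocking it) keeps $S$ feasible — Eq.~(\ref{P2-coverage}) is an inequality, so dropping a filter never violates it — and does not increase the objective since a filter with $b_{\tilde p}=0$ contributes $g_{\tilde p}\ge 0$; exactly one child of $\tilde p$ contains bad addresses, in which case replacing $\tilde p$ by that child leaves all bad addresses covered exactly as before while possibly shrinking $g$ and keeping $b$ unchanged, hence the objective does not increase; or both children contain bad addresses, in which case $\tilde p$ is the longest common prefix of the bad addresses it covers and is therefore already a node of LCP-tree($\mathcal{BL}$), contradicting the assumption. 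Iterating the first two operations over all offending prefixes terminates (each step strictly shrinks the tree or moves a prefix strictly down toward the leaves of the finite LCP tree) and yields the desired $S'$ supported entirely on LCP-tree($\mathcal{BL}$), with no more leaves than $S$ and no larger cost.

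The only real subtlety — and the point where this differs from Proposition~\ref{prop:all} — is that I must \emph{not} argue, as in the BLOCK-ALL case, that every non-leaf node of $S'$'s pruned subtree has two children: under (\ref{P2-coverage}) a node may legitimately retain a single child, reflecting a bad address left unfiltered, which is exactly why the statement of Proposition~\ref{prop:some} omits that clause. So the main thing to be careful about is the case analysis on $\tilde p$, specifically verifying in cases~1 and~2 that the cost $g_{p/l}-b_{p/l}$ is monotone under the replacement: in case~1 we delete a purely-collateral filter, which can only help; in case~2 the retained child covers the same set of bad addresses (so $b$ is unchanged) while covering a subset of the good addresses (so $g$ does not grow). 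I would also note explicitly that after each transformation the cardinality and coverage constraints still hold, so feasibility is preserved throughout, and that the process terminates because LCP-tree($\mathcal{BL}$) is finite and each step reduces a suitable potential (e.g. the number of tree nodes lying outside LCP-tree($\mathcal{BL}$), counted with depth).
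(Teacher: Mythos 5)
Your proposal is correct and follows essentially the same route as the paper: the paper's proof simply invokes the reduction to a pruned subtree of LCP-tree($\mathcal{BL}$) established in Proposition~\ref{prop:all} and notes that the non-overlap constraint (\ref{P2-coverage}) is automatically satisfied by taking the leaves of the pruned subtree as the filters. Your version re-derives that reduction explicitly and, usefully, verifies that the BLOCK-SOME cost $g_{p/l}-b_{p/l}$ is non-increasing under each transformation and that the two-children requirement must be dropped, details the paper leaves implicit.
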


\begin{proof}  
In Proposition~\ref{prop:all}, we proved that any solution of Eq. (\ref{P1-OF})--(\ref{P1-cardinality}) can be reduced to a (pruned)
subtree of the LCP tree with at most $F_{max}$ leaves. Moreover, the constraint expressed by Eq.~(\ref{P2-coverage}), 
which imposes the use of non-overlapping prefixes, is automatically imposed considering the {\em leaves} of the pruned subtree as the selected filter. 
This proves that any feasible solution of BLOCK-SOME can be represented as a pruned subtree of the LCP tree with at most $F_{max}$ leaves. 
And thus, so can an optimal solution.
\end{proof}




\subsubsection*{Algorithm} The algorithm that solves BLOCK-SOME is similar to Algorithm \ref{Alg:P1} in that it
relies on the LCP tree and a dynamic programming (DP) approach. The main difference is that not all bad addresses need to be filtered, hence, at each step,
we can assign $n=0$ filters to the left and/or right subtree.
More specifically, whereas in line (11) of Algorithm~\ref{Alg:P1} we had $n=1,...,F-1$, now we have $n=0,1,...,F$.
We can recursively compute the optimal solution as before:
\begin{align}
\label{dp:block-some}z_p(F) &= \min_{n=0,..., F} \Big\{ z_{s_l}(F-n) + z_{s_r}(n) \Big\}
\end{align}
with boundary  conditions
\begin{align}
\label{cond1:block-some} &z_p(0) = 0 \quad \forall~p \\
\label{cond2:block-some} &z_p(1)  = \min \Big\{  g_p-b_p,  \min_{n=0,1} \Big\{ z_{s_l}(1-n)+z_{s_r}(n) \Big\} \Big\}  \\
\label{cond3:block-some} &z_{leaf}(F) = -b_{leaf}	 \quad \forall F\geq 1
\end{align}
where  $p$ is an intermediate node (prefix) and $leaf$ is a leaf node in the LCP-tree.

\subsubsection*{Complexity} The analysis of Algorithm~\ref{Alg:P1} applies to this algorithm as well.
The complexity is the same, \ie, linearly increasing with $|\mathcal{BL}|$.

\subsubsection*{BLOCK-ALL vs. BLOCK-SOME}
There is an interesting connection between the two problems.
The latter can be regarded as an automatic way to select the best subset from $\mathcal{BL}$
and run BLOCK-ALL only on that subset.
If the absolute value of weights of bad addresses are significantly larger than the weights of the good addresses,
then BLOCK-SOME degenerates to BLOCK-ALL.

\begin{figure}[t!]
\begin{center}
\includegraphics[width=3.3in ]{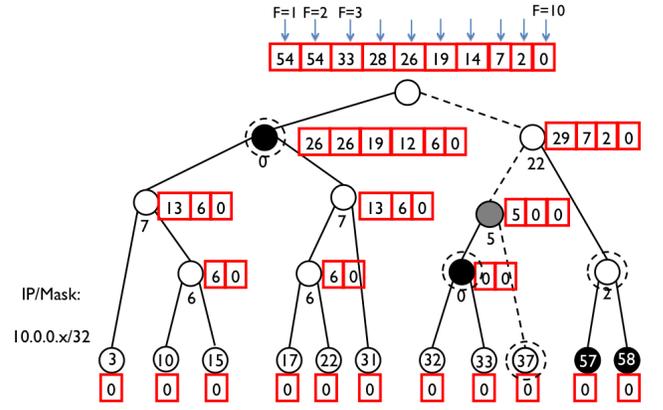}
\caption{\label{fig:dynamic}
{\bf Example of (i) BLOCK-ALL and (ii) TIME-VARYING BLOCK-ALL.}
Consider a blacklist of 10 bad IP addresses $\mathcal{BL} = \{10.0.0.3,10.0.0.10,10.0.0.15,10.0.0.17,10.0.0.22,10.0.0.31,10.0.0.32,$ $10.0.0.33,10.0.0.57,10.0.0.58\}$
The table next to each node $p$ shows the minimum cost $z_p(F)$ computed by the DP algorithm for BLOCK-ALL for $F=1,...\textit{number of leaves in subtree}$. The optimal solution to BLOCK-ALL consists of the 4 prefixes highlighted in black. When a new address, \eg $10.0.0.37$, is added to the blacklist, a leaf node is added to the tree and TIME-VARYING needs to update all and only the predecessor nodes in LCP-tree($\mathcal{BL}$), indicated by the dashed lines, according to Eq. (\ref{eq:block-all}). Moreover, a new node is created to denote the longest common prefix between $10.0.0.37$ and $10.0.0.32$ (or $10.0.0.33$). Note that all other nodes corresponding to the longest common prefixes between $10.0.0.37$ and other addresses in $\mathcal{BL}$ are already in the LCP tree. The new optimal solution consists of the 4 prefixes indicated by the dashed circles.}
\end{center}
\end{figure}

\subsection{\label{sec:time-varying}TIME-VARYING BLOCK-ALL(SOME)}

 We now consider that the blacklist and whitelist change over time and we seek to incrementally update the filtered prefixes that are affected by the change. More precisely, consider a sequence of blacklists, $\{ \mathcal{BL}_{\tau_0}, \mathcal{BL}_{\tau_1}, \ldots \}$ and of whitelists, $\{ \mathcal{WL}_{\tau_0}, \mathcal{WL}_{\tau_1}, \ldots \}$ at times $ \tau_0, \tau_1, \ldots $, respectively.

\subsubsection*{Problem Statement}
Given:
(i) a blacklist and whitelist, $\mathcal{BL}_{\tau_{i-1}}$ and $\mathcal{WL}_{\tau_{i-1}}$, (ii) the number of available filters $F_{max}$,
(iii) the corresponding solution to BLOCK-ALL(SOME), denoted by $\mathcal{S}_{\tau_{i-1}}$, and
(iv) another blacklist and whitelist, $\mathcal{BL}_{\tau_{i}}$ and $\mathcal{WL}_{\tau_i}$;
obtain the solution to BLOCK-ALL(SOME) for the second blacklist/whitelist, denoted by $\mathcal{S}_{\tau_{i}}$.

\subsubsection*{Algorithm} Consider, for the moment, that the whitelist remains the same and focus on the changes in the blacklist.

{\em (i) Addition.} First, consider that the two blacklists differ only in a single new bad address, which does not appear in $\mathcal{BL}_{\tau_{i-1}}$, but appears in $\mathcal{BL}_{\tau_i}$. There are two cases, depending on whether the new bad address belongs to a prefix that is already filtered in $\mathcal{S}_{\tau_{i-1}}$. If it is, no further action is needed, and $\mathcal{S}_{\tau_{i}} = \mathcal{S}_{\tau_{i-1}}$. Otherwise, we modify the LCP tree that represents $\mathcal{S}_{\tau_{i-1}}$ to also include the new bad address, as illustrated in Fig.~\ref{fig:dynamic}.
The key point is that we only need to add one new intermediate node to the LCP tree (the gray node in Fig.~\ref{fig:dynamic}), corresponding to the longest common prefix between the new bad address and its closest bad address that is already in the LCP tree.
 The optimal allocation of $F$ filters to the subtree rooted at prefix $p$ depends only on how these $F$
filters are allocated to the children of $p$. Hence, when we add a new node to the LCP tree, we need to recompute the optimal filter allocation (\ie recompute $z_p(F)$ and $X_p(F) \ \forall F$, according to Eq.~(\ref{eq:block-all}))
for all and only the predecessors of the new node, all the way up to the root node.

{\em (ii) Deletion.} Then assume that two blacklists differ in one deleted bad address,
which appears in $\mathcal{BL}_{\tau_{i-1}}$ but not in $\mathcal{BL}_{\tau_i}$.
In this case, we modify the LCP tree that represents $\mathcal{S}_{\tau_{i-1}}$ to
we remove the leaf node that corresponds to that address as well as its parent node (since that node does not
have two children any more), and we recompute the optimal filter allocation for all and only the node's predecessors.

{\em (iii) Adjustment.} Finally, suppose that the two blacklists differ in one address, which appears in both blacklists but with different weights;
or, that the two blacklists are the same, while the two whitelists differ in one address (it either appears
in one of the two whitelists or it appears in both whitelists but with different weights).
In all of these cases, we do not need to add or remove any nodes from the LCP tree, but we do need to adjust the collateral
damage or filtering benefit associated with one node, hence recompute the optimal filter allocation for all and only that node's predecessors.

{\em (iv) Multiple addresses.} If the two successive time instances differ in multiple addresses, we repeat the procedures described above as needed, \ie we perform one node addition for each new bad address, one deletion for each removed bad address, and up to one adjustment for each other difference.

\subsubsection*{Complexity}
Since the LCP tree is a complete binary tree, any leaf node has at most $\log(|\mathcal{BL}|)$ predecessors,
so, inserting a new bad address (or removing one) requires $O(\log(|\mathcal{BL}|)F_{max}^2)$ operations.
Hence, deriving $\mathcal{S}_{\tau_{i}}$ from $\mathcal{S}_{\tau_{i-1}}$ as described above is asymptotically better than computing it from scratch using Algorithm~\ref{Alg:P1}.
 if and only if the number of different addresses between the two time instances is less than $\frac{|\mathcal{BL}|}{\log |\mathcal{BL}|}$.

\subsection{\label{sec:flooding}FLOODING}

\subsubsection*{Problem Statement}
Given: (i) a blacklist $\mathcal{BL}$ and a whitelist $\mathcal{WL}$, where the absolute weight of each bad and good address  is equal to the amount
of traffic it generates, (ii) the number of available filters $F_{max}$, and (iii) a constraint on the victim's link capacity (bandwidth) $C$;
select filters so as to minimize collateral damage and make the total traffic fit within the victim's link capacity.

\subsubsection*{Formulation}
To formulate this problem, we need to make two adjustments to the general framework of Eq.~(\ref{P-GENERAL})--(\ref{P-domain}).
First, Eq.~(\ref{P-GENERAL}) becomes Eq.~(\ref{P3-OF}), which expresses the goal to minimize collateral damage.
Second, we add a new constraint Eq.~(\ref{P3-capacity}), which specifies that the total traffic that remains unblocked after filtering
(which is the total traffic, $T_0 = \sum_{ip \in \mathcal{BL} \cup \mathcal{WL}}w_{ip}$, minus the traffic
that gets blocked, $\sum_{p/l}\Big( g_{p/l} + b_{p/l}\Big) x_{p/l}$ should fit within the link capacity $C$,
so as to avoid congestion and packet loss.
\begin{align}
\label{P3-OF} \hspace{2.5cm}\min \sum_{p/l} g_{p/l} x_{p/l}&&\\
\label{P3-cardinality} \text{s.t.} \sum_{p/l}x_{p/l}  \leq F_{max} &&\\
\label{P3-capacity}   T_0 - \sum_{p/l}\Big( g_{p/l} + b_{p/l}\Big)  x_{p/l}  \leq C && \\
\label{P3-coverage}  \sum_{p/l: ip \in p/l} x_{p/l}   \leq 1\quad \forall ip \in \mathcal{BL} &&
\end{align}

\subsubsection*{Characterizing an Optimal Solution}
We represent the optimal solution as a pruned subtree of an LCP-tree.
However, we start with the full binary tree of all bad and good  addresses LCP-tree($\mathcal{BL} \cup \mathcal{WL}$). Moreover, to handle the constraint in Eq. (\ref{P3-capacity}), each node corresponding to
prefix $p$ is assigned an additional cost, $T_p$, indicating the total amount of traffic sent by $p$, $T_p = g_p + b_p$.

\begin{proposition}\label{prop:flooding}
An optimal solution of FLOODING can be represented as the leaves of a pruned subtree of LCP-tree($\mathcal{BL} \cup \mathcal{WL}$), with the same root, up to $F_{max}$ leaves, and total cost of the leaves $\geq T_0-C$.
\end{proposition}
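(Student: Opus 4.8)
The plan is to imitate the proofs of Proposition~\ref{prop:all} and Proposition~\ref{prop:some}, with two modifications: the host tree is now LCP-tree($\mathcal{BL}\cup\mathcal{WL}$) rather than LCP-tree($\mathcal{BL}$) --- this enlargement is essential, as it is what makes the tree-contraction step below safe for the capacity constraint --- and we carry the bandwidth constraint~(\ref{P3-capacity}) along through every rewrite step. Concretely, I would show that from any \emph{feasible} solution $S$ of FLOODING one can build a feasible solution $S'$ that (i) is a pruned subtree of LCP-tree($\mathcal{BL}\cup\mathcal{WL}$) with the same root and at most $F_{max}$ leaves, (ii) has collateral damage no larger than that of $S$, and (iii) still satisfies~(\ref{P3-capacity}); since an optimal solution is in particular feasible, this proves the statement. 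The clause ``total leaf cost $\geq T_0-C$'' is then immediate: the leaves of a pruned subtree are pairwise disjoint prefixes, so $\sum_{\text{leaves }p}T_p=\sum_{p/l}(g_{p/l}+b_{p/l})x_{p/l}$, which is $\geq T_0-C$ by~(\ref{P3-capacity}); disjointness also makes~(\ref{P3-coverage}) hold automatically.

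First I would restrict attention to solutions whose selected prefixes are pairwise disjoint (overlapping prefixes are nested, and then the inner one is wasteful and can be dropped, freeing a filter without changing the set of blocked addresses), so that a solution is exactly a pruned subtree of the full binary tree on all $32$-bit addresses whose leaves are the selected prefixes; I would also drop any selected prefix disjoint from $\mathcal{BL}\cup\mathcal{WL}$, since it blocks no traffic, so that every leaf --- indeed every node of the resulting pruned tree --- meets $\mathcal{BL}\cup\mathcal{WL}$. Then, exactly as in Proposition~\ref{prop:all}, I would pick a prefix $\tilde p$ in the tree of $S$ that is not a node of LCP-tree($\mathcal{BL}\cup\mathcal{WL}$): if only one child of $\tilde p$ meets $\mathcal{BL}\cup\mathcal{WL}$, replace $\tilde p$ by that child; if both children meet $\mathcal{BL}\cup\mathcal{WL}$, then $\tilde p$ is already the longest common prefix of $(\mathcal{BL}\cup\mathcal{WL})\cap\tilde p$ and hence a node of LCP-tree($\mathcal{BL}\cup\mathcal{WL}$), contradicting the choice of $\tilde p$ (the ``neither child'' case cannot occur after the clean-up, and the ``one child'' step also absorbs any ancestor of the LCP-tree root). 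The one genuinely new thing to check is that the contraction preserves~(\ref{P3-capacity}): the surviving child covers exactly the same addresses of $\mathcal{BL}\cup\mathcal{WL}$ as $\tilde p$, so $g_{\tilde p}$, $b_{\tilde p}$, and therefore $\sum_{p/l}(g_{p/l}+b_{p/l})x_{p/l}$, are unchanged; the number of leaves does not increase, so~(\ref{P3-cardinality}) is preserved; and the collateral damage does not increase. Iterating over all such $\tilde p$ produces the desired $S'$.

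I do not expect a real obstacle here. The combinatorial skeleton is the one already established for BLOCK-ALL/BLOCK-SOME, and the one genuinely new ingredient, the capacity constraint, turns out to be benign precisely because every rewrite step leaves the set of blocked $\mathcal{BL}\cup\mathcal{WL}$-addresses --- and hence the blocked-traffic sum --- \emph{unchanged} rather than merely non-decreasing. The only points that need care are bookkeeping: always doing the accounting for $\sum_{p/l}(g_{p/l}+b_{p/l})x_{p/l}$ and for the filter budget over the leaves of the current pruned subtree (so nested or empty prefixes do not cause double counting), and confirming that the object produced at the end really has the root of LCP-tree($\mathcal{BL}\cup\mathcal{WL}$) as its root --- which is why the preliminary removal of prefixes disjoint from $\mathcal{BL}\cup\mathcal{WL}$, together with the ``ancestor absorption'' remark, is included.
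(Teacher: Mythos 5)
Your proposal is correct and follows essentially the same route as the paper: transform any feasible solution, step by step, into one whose selected prefixes are nodes of LCP-tree($\mathcal{BL}\cup\mathcal{WL}$), observing that each rewrite blocks exactly the same set of $\mathcal{BL}\cup\mathcal{WL}$ addresses, hence preserves the filter budget and the capacity constraint (\ref{P3-capacity}) while not increasing collateral damage, with the leaf-cost bound $\geq T_0-C$ read off directly from feasibility. Your preliminary clean-up of nested and empty prefixes is just a slightly more explicit packaging of the paper's case analysis.
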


\begin{proof}
Similarly to Proposition~\ref{prop:all},
we prove that for every feasible solution to FLOODING $S$, there exists another feasible solution $S'$,
which (i) can be represented as a pruned subtree of LCP-tree($\mathcal{BL} \cup \mathcal{WL}$) as described
in the proposition and (ii) whose collateral damage is smaller or equal to $S$'s.
This is sufficient to prove the proposition, since an optimal solution is also a feasible one.

Any filtering solution can be represented as a pruned subtree of
LCP-tree$\left( \{0, 1,..., 2^{32}-1 \} \right )$
with the same root and leaves corresponding to the filtered prefixes.
$S$ is a feasible solution to FLOODING, therefore:
$S$'s tree has up to $F_{max}$ leaves, otherwise Eq.~(\ref{P3-cardinality}) would be violated;
and the total cost of $S$'s leaves is $\geq T_0-C$, otherwise Eq.~(\ref{P3-capacity}) would be violated.

Suppose that  $S$ includes a prefix $\tilde p$ that is {\em not} in LCP-tree($\mathcal{BL} \cup \mathcal{WL}$). We can construct a better feasible solution $S'$, which
can be represented as a pruned subtree of LCP-tree($\mathcal{BL} \cup \mathcal{WL}$): $S'$ has
the same root, up to $F_{max}$ leaves and total cost of the leaves $\geq T_0-C$. There are three possibilities:
\begin{enumerate}
\item $\tilde p$ includes neither bad nor good addresses.
In this case, we can simply remove $\tilde p$ from $S$, \ie unblock $\tilde p$.
\item Only one of $\tilde p$'s children includes bad or good addresses.
In this case, we can replace $\tilde p$ with the child that contains the bad addresses.
\item Both of $\tilde p$'s children include bad or good addresses. In this case, $\tilde p$ is already
a longest common prefix and we do not need to do anything.
\end{enumerate}
Clearly, each of these operations transforms feasible solution $S$ into another
feasible solution with smaller or equal collateral damage while still preserving the capacity constraint. This is because the transformations filter the same amount of traffic, just using the longest prefix possible to do so. We can repeat this process for all prefixes that are in $S$ but not in LCP-tree($\mathcal{BL} \cup \mathcal{WL}$),
until we create a feasible solution $S'$ that includes only prefixes from LCP-tree($\mathcal{BL} \cup \mathcal{WL}$)
and has smaller or equal collateral damage.
\end{proof}


\begin{theorem} \label{theorem}
FLOODING (\ie Eq.(\ref{P3-OF})-(\ref{P3-coverage})) is NP-Hard.
\end{theorem}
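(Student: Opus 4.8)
The plan is to reduce a known NP-hard problem to FLOODING. The natural candidate is the classical (0/1) \textsc{Knapsack} decision problem, since FLOODING already has the flavor of a knapsack: we are choosing a subset of ``items'' (prefixes), each carrying a ``weight'' (traffic amount $T_{p/l} = g_{p/l}+b_{p/l}$) that must sum to at least $T_0 - C$, while minimizing a different additive quantity (collateral damage $g_{p/l}$). Concretely, I would start from an instance of \textsc{Subset-Sum} or \textsc{Knapsack} with items of sizes $a_1,\dots,a_k$ and a target, and build a blacklist/whitelist instance whose LCP-tree is engineered so that the only ``useful'' prefixes to filter correspond bijectively to the $k$ items, with the $j$-th such prefix having traffic $T = a_j$ and a collateral-damage value encoding the knapsack profit.

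The key steps, in order, would be: (1) fix the reduction source — I expect \textsc{Knapsack} (decide whether a subset of items with total size $\le$ budget has total profit $\ge$ threshold) — and recall it is NP-complete; (2) given such an instance, construct $\mathcal{BL}$, $\mathcal{WL}$, weights, $F_{max}$, and $C$ in polynomial time: place, for each item $j$, a small cluster of addresses far apart in IP space from the other clusters, so the LCP-tree has a root whose subtrees are the $k$ clusters, and arrange the good/bad addresses inside cluster $j$ so that filtering the cluster's root prefix has benefit $b = 0$ (or a controlled value), traffic $T_{p_j} = a_j$, and collateral damage $g_{p_j}$ equal to the item's cost; set $F_{max} = k$ so the filter-count constraint is slack and the binding constraint is Eq.~(\ref{P3-capacity}); (3) translate Eq.~(\ref{P3-capacity}), $T_0 - \sum T_{p/l}x_{p/l} \le C$, into ``the selected prefixes' total traffic is $\ge T_0 - C$,'' which is exactly a knapsack-style lower bound on chosen sizes, and match $T_0 - C$ to the knapsack parameter; (4) invoke Proposition~\ref{prop:flooding} to argue that without loss of generality an optimal FLOODING solution uses only prefixes from LCP-tree($\mathcal{BL}\cup\mathcal{WL}$), and then argue that among those, filtering a strict sub-prefix or a super-prefix of a cluster root is never better than either filtering the whole cluster root or nothing — so the feasible solution space collapses to subsets of the $k$ cluster prefixes; (5) conclude that the FLOODING optimum on the constructed instance equals the knapsack optimum, hence a polynomial algorithm for FLOODING would solve \textsc{Knapsack}, so FLOODING is NP-hard.

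The main obstacle I anticipate is step (4) together with the arithmetic bookkeeping in step (2): I need the gadget for each item to be ``self-contained'' — filtering any prefix that straddles two clusters must be dominated (it would block everything, violating either feasibility in the intended direction or wasting the single filter budget per cluster), and filtering a proper sub-prefix of a cluster root must never beat filtering the cluster root, because a sub-prefix blocks strictly less traffic while (in a carefully chosen gadget) causing no less collateral damage. Getting the weights so that $b_{p_j}$, $g_{p_j}$, $T_{p_j}$ simultaneously realize the desired triple, while keeping all numbers polynomially bounded in the input (so the reduction is genuinely polynomial and the hardness is not merely ``pseudo''-hardness — note NP-hardness, not strong NP-hardness, is claimed, so a reduction from \textsc{Subset-Sum}/\textsc{Knapsack} with binary-encoded numbers suffices), is the delicate part. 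A secondary subtlety is ensuring the $F_{max}$ constraint is truly non-binding: choosing $F_{max} = k$ makes every subset of cluster prefixes feasible with respect to Eq.~(\ref{P3-cardinality}), so the only real constraint is the capacity one, which is what makes the instance equivalent to knapsack rather than to an easier problem.
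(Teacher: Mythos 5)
Your overall strategy (a reduction from a knapsack-type problem in which the capacity constraint of Eq.~(\ref{P3-capacity}) plays the role of the knapsack constraint) is the right family of argument and is close in spirit to the paper's proof, but the gadget you describe does not work as stated, and the failure is exactly at the step you flagged as delicate. To encode a general \textsc{Knapsack} item with size $a_j$ and profit $p_j$ independently, your cluster $j$ must satisfy $g_{p_j}=p_j$ and $g_{p_j}+b_{p_j}=a_j$, i.e., it must contain bad traffic of total weight $b_{p_j}=a_j-p_j>0$ whenever $p_j<a_j$. But then the sub-prefix (or the set of $/32$'s) covering only the bad addresses of cluster $j$ blocks $a_j-p_j$ units of traffic at \emph{zero} collateral damage, so the claim in your step (4) that a proper sub-prefix ``causes no less collateral damage'' than the cluster root is false. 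These free partial selections correspond to no knapsack choice and can strictly undercut the intended optimum: for instance, if $\sum_j (a_j-p_j)\ge T_0-C$, the FLOODING optimum on your instance is $0$ regardless of the knapsack structure, and with $F_{max}=k$ the filter budget does not exclude such solutions. Conversely, if you force $b_{p_j}=0$ (all-good clusters), then $g_{p_j}=T_{p_j}$, profits and sizes are no longer independent, and you have silently retreated to a \textsc{Subset-Sum}-like special case. That retreat can be made rigorous (each item a single good address, profit equal to size, plus a dummy bad address so the instance is well-formed and the capacity bound forces blocking), and it would salvage NP-hardness, but your write-up neither commits to it nor supplies the domination argument needed for the general-knapsack gadget; as it stands, step (4) is a genuine gap.

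For comparison, the paper's own reduction avoids this issue entirely: it reduces from the knapsack problem \emph{with a cardinality constraint} (1.5KP), maps items to individual addresses (leaves, which have no sub-prefixes to reason about), maps the cardinality bound $k$ to the filter budget $F_{max}$ (after arguing that the inequality in Eq.~(\ref{P3-cardinality}) can be taken as an equality without loss), and neutralizes every aggregate prefix by assigning it total traffic exceeding the capacity-side budget so it can never be selected. In other words, where you make $F_{max}$ slack and rely on a cluster gadget to decouple collateral damage from traffic, the paper makes $F_{max}$ carry the cardinality constraint of the source problem and keeps the items at the address level. If you prefer to keep your plain-\textsc{Knapsack}/\textsc{Subset-Sum} source, the single-address (profit $=$ size) variant is the cleanest way to complete your proof.
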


\begin{proof}
To prove that FLOODING is $\mathcal{NP}$-hard, we consider the knapsack problem with a cardinality constraint:
\begin{align}\label{1.5KP-OF} 
\max \sum_{i \in N} p_{i} x_{i}&&\\
\label{1.5KP-cardinality}    \sum_{i \in N}  x_{i}  = k &&\\
\label{1.5KP-capacity} \text{s.t.} \sum_{i \in N} w_{i} x_{i}  \leq C_1
\end{align}
which is known to be $\mathcal{NP}$-hard \cite{KPbook} and we show that it reduces to FLOODING.
To do this, we put FLOODING in a slightly different form, by making two changes.

First, we change the inequality in Eq. (\ref{P3-cardinality}) to an equality.
Any feasible solution to FLOODING that uses $F< F_{max}$ filters can be transformed to another feasible solution with
exactly $F_{max}$ filters, without increasing collateral damage. In fact, given a feasible solution $S$ that uses $F<F_{max}$ filters,
as long as $F < |\mathcal{BL}|$, it is always possible to remove a filter from a prefix $p$ and add two filters to the two prefixes
corresponding to $p$'s children in LCP-tree($\mathcal{BL} \cup \mathcal{WL}$).
The solution constructed this way uses $F+1$ filters, blocks all addresses blocked in $S$, and has a cost less or equal to $S$'s.

Second, we define variables $\bar x_{p/l}= - x_{p/l} $, $\bar F_{max} = - F_{max}$ and  $\bar C = T_0- C$
and use them to rewrite FLOODING:
\begin{align}\label{P3b-OF}
&\max \sum_{p/l} g_{p/l} \bar x_{p/l} \\
\text{s.t. } &\sum_{p/l} \bar x_{p/l} = \bar F_{max}, \\
\label{P3b-constraints} &\sum_{p/l}\Big( g_{p/l} + b_{p/l}\Big) \bar x_{p/l} \leq \bar C \\
\label{P3b-overlap} & \sum_{p/l : ip \in p/l}- \bar x_{p/l} \leq -1~\forall ip \in \mathcal{BL}&
\end{align}

For a given instance of the problem defined by Eq. (\ref{1.5KP-OF})-(\ref{1.5KP-cardinality}), we construct an equivalent instance of the problem
defined by Eq. (\ref{P3b-OF})-(\ref{P3b-overlap}) by introducing the following mapping.
For $ip \in \mathcal{BL}\cup \mathcal{WL}$: $\ g_{ip} = p_{i}$, $ (g_{ip} + b _{ip})=w_{i}$.
For all other prefixes $p/l$ that are not addresses in the blacklist or whitelist:  $(g_{p/l} + b _{p/l}) = \bar C + 1$. Moreover, we assign $\bar F_{max} = k$ and $\bar C = C_1$.
With this assignment, a solution to the problem defined by Eq. (\ref{1.5KP-OF}) can be obtained by solving FLOODING, then taking the values of variables
$x_{p/l}$ that are blocked.
\end{proof}


\subsubsection*{Algorithm}
Given the hardness of the problem, we do not look for a polynomial-time algorithm.
We design a pseudo-polynomial-time algorithm that optimally solves FLOODING,
Its complexity is linearly with the number of good and bad addresses and with the magnitude of $C_{max}$.

Our algorithm is similar to the one  that solves BLOCK-SOME, \ie, it relies on an LCP tree and a DP approach.
However, we now use the LCP tree of all the bad and good addresses.
Moreover, when we compute the optimal filter allocation for each subtree, we now need to consider not only the number of filters
allocated to that subtree, but also the corresponding amount of capacity (\ie, the amount of the victim's capacity consumed by
the unfiltered traffic coming from the corresponding prefix).
We can recursively compute the optimal solution bottom-up as before:
\begin{equation}
\label{P5-DP} z_p(F,c)=
\min_{ \substack{n =0,...,F \\ m=0,...,c} }\{ z_{s_l}(F-n, c-m)+z_{s_r}(n, m) \} 
\end{equation}
where $z_p(F,c)$ is the minimum collateral of prefix p when allocating $F$ filters and capacity $c$ to that prefix.

\subsubsection*{Complexity}
Our DP approach computes $O(CF_{max})$ entries for every node in LCP-tree($\mathcal{BL} \cup \mathcal{WL}$).
Moreover, the computation of a single entry, given the entries of descendant nodes, require  $O(CF_{max})$ operations, Eq.(\ref{P5-DP}).
We can leverage again the observation that we do not need to compute $CF_{max}$ entries for all nodes in the LCP tree:
At a node $p$, it is sufficient to compute Eq.(\ref{P5-DP}) only for  $c= 0,...,\tilde C = \min\{ C, \sum_{ip \in p/l} w_{ip} \} \leq C$ and $f = 0,..., F=\max\{F_{max}, |leaves(p)|\}$.
Therefore, the optimal solution to FLOODING, $z_{root}(F_{max},C)$, can be computed in $O((|\mathcal{BL}|+|\mathcal{WL}|)C^2)$ time.
This is increasing linearly with the number of addresses in $\mathcal{BL} \cup \mathcal{WL}$ and is polynomial in $C$. The overall complexity is pseudo-polynomial because $C$ cannot be polynomially bounded in the input size. In the evaluation section, we present a heuristic algorithm that operates in increments $\Delta C$ of $C$. Finally, we note that $F_{max}<<C$ and thus $F_{max}$ does not appear in the asymptotic complexity.

\subsubsection*{BLOCK-SOME vs. FLOODING}
There is an interesting connection between the two problems.
To see that, consider the partial Lagrangian relaxation of Eq. (\ref{P3-OF})--(\ref{P3-coverage}):
 \begin{align}\label{P3-OF-lagrangian}
\max_{\lambda \ge 0 } & \Big\{\min \sum_{p/l} \Big[ (1-\lambda) g_{p/l}  - \lambda b_{p/l}\Big]  x_{p/l} +  &&\\
&\nonumber ~~~~~~~~~~+ \sum_{p/l} \lambda T_0 - \lambda C\Big\}&&\\
\text{s.t.} &\sum_{p/l} x_{p/l} \leq F_{max} &&\\
 \label{P3-dual-coverage} & \sum_{p/l :  ip \in p/l} x_{p/l} \leq 1\quad \forall  \ ip \in \mathcal{BL} &&
\end{align}
For every fixed $\lambda \geq 0$, Eq. (\ref{P3-OF-lagrangian})--(\ref{P3-dual-coverage}) are equivalent to Eq. (\ref{P2-OF})--(\ref{P2-coverage}) for a specific assignments of weights $w_{ip}$.
This shows that dual feasible solutions of FLOODING are instances of BLOCK-SOME for a particular assignment of weights.
The dual problem, in the variable $\lambda$, aims exactly at tuning the Lagrangian multiplier to find the best assignment of weights.



\subsection{DISTRIBUTED-FLOODING}

{\em Problem Statement:} Consider a victim $V$ that connects to the Internet through its ISP and is flooded by a set of attackers listed in a blacklist $\mathcal{BL}$, as in Fig.1(a).
To reach the victim, attack traffic  passed through one or more ISP routers. Let $\mathcal R$ be the set of unique such routers. 
Let each router $u\in \mathcal R$  have capacity $C^{(u)}$ on the downstream link (towards $V$) and a limited number of filters $F_{max}^{(u)}$. The volume of good/bad traffic through every router is assumed known. Our goal is to allocate filters on some or all routers, in a distributed way, so as to minimize the total collateral damage and avoid congestion on all links of the ISP network.

{\em Formulation.} Let the variables $x^{(u)}_{p/l}\in\{0,1\} $ indicate whether or not filter $p/l$ is used at router $u$. Then the distributed filtering problem can be stated as:
\begin{align}
\label{P6-OF}
\min \sum_{u\in \mathcal R}\sum_{p/l} g^{(u)}_{p/l} x^{(u)}_{p/l}&~ &&
\end{align}
\begin{align}
\label{P6-cardinality} \text{s.t.~}	\sum_{p/l} x^{(u)}_{p/l} & \leq  F^{(u)}_{max} \quad && \forall u\in \mathcal R
\end{align}
\begin{align}
\label{P6-capacity}    		T_0^{(u)} -\sum_{p/l}\Big( g^{(u)}_{p/l} + b^{(u)}_{p/l}\Big) x^{(u)}_{p/l} &\leq C^{(u)}  && \forall u\in \mathcal R\\
\label{P6-coverage}   	\sum_{u\in\mathcal R}\sum_{p/l \ni ip} x^{(u)}_{p/l} & \leq 1\quad&&\forall ip \in \mathcal{BL} 
\end{align}

{\em Characterizing an Optimal Solution}.
Given the sets $\mathcal{BL}$, $\mathcal{WL}$, $\mathcal R$, and $F_{max}^{(u)}$, $C^{(u)}$ at each router,  we have:
\begin{proposition}\label{prop:dist}
There exists an optimal solution of DIST-FLOODING
that can be represented as a set of $|\mathcal R|$ different pruned subtrees of the LCP-tree($\mathcal{BL}\cup \mathcal{WL}$),
each corresponding to a feasible solution of FLOODING for the same input, and s.t. every subtree leaf is not a node of another subtree.
\end{proposition}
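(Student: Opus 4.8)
The plan is to reuse the ``transform any feasible solution into a structured one of no greater cost'' template from the proofs of Propositions~\ref{prop:all} and~\ref{prop:flooding}, applied one router at a time. The first thing to notice is that the only constraint in Eq.~(\ref{P6-OF})--(\ref{P6-coverage}) that couples different routers is the global coverage constraint Eq.~(\ref{P6-coverage}); the cardinality and capacity constraints Eq.~(\ref{P6-cardinality}) and Eq.~(\ref{P6-capacity}) are separable over $u\in\mathcal R$, and for each fixed $u$ they are precisely the constraints of a FLOODING instance over the address set $\mathcal{BL}\cup\mathcal{WL}$ with parameters $F^{(u)}_{max}$, $C^{(u)}$, weights $g^{(u)}_{p/l}$, $b^{(u)}_{p/l}$, and total traffic $T_0^{(u)}$. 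Likewise the objective Eq.~(\ref{P6-OF}) decomposes as the sum of the per-router collateral damages. So the natural route is: take an optimal solution $S$, cut it into its per-router pieces $S^{(u)}=\{p/l:x^{(u)}_{p/l}=1\}$, clean up each piece, and then clean up the interaction between pieces.

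Concretely, I would first observe that each $S^{(u)}$ is a feasible solution of the FLOODING instance at router $u$, and then apply the transformation from the proof of Proposition~\ref{prop:flooding} \emph{independently} at every router: repeatedly drop a filtered prefix that carries no traffic at $u$, or replace a filtered prefix by its unique traffic-carrying child, until $S^{(u)}$ is a pruned subtree of LCP-tree($\mathcal{BL}\cup\mathcal{WL}$) with the same root, at most $F^{(u)}_{max}$ leaves, leaf cost $\ge T_0^{(u)}-C^{(u)}$, and per-router collateral damage that has not increased. The reason this is safe to do router-by-router is that every such move filters exactly the same amount of traffic at $u$ and exactly the same set of \emph{bad} addresses at $u$; hence the left-hand side of Eq.~(\ref{P6-coverage}) is untouched and Eq.~(\ref{P6-OF}) does not increase. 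After this step the $|\mathcal R|$ pieces are pruned subtrees of the common tree, each a feasible FLOODING solution, with total collateral damage no larger than the optimum (hence optimal), and in fact each piece is optimal for its own FLOODING instance relative to the set of bad addresses it has been assigned --- otherwise swapping in a cheaper FLOODING solution covering the same bad addresses would strictly improve $S$.

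The remaining and, I expect, hardest part is to enforce that no leaf of one subtree is a node of another. Here I would argue: if a leaf $p$ of $S^{(u)}$ coincided with a node of $S^{(v)}$, $v\neq u$, then $S^{(v)}$ filters some prefix contained in $p$ (or equal to $p$), while $S^{(u)}$ filters all of $p$; if any bad address lives under $p$, it is then filtered both at $u$ and at $v$, contradicting Eq.~(\ref{P6-coverage}). So the only way the condition can fail is when the overlapping region carries good traffic only, and this good-only case is the real obstacle, because such redundant filtering of good prefixes is not ruled out by feasibility --- it can relieve capacity on two links at once. The plan to dispose of it is a local clean-up: pick, among all optimal solutions reachable by the previous step, one that additionally uses the fewest filters; in such a solution argue that a purely-good filtered prefix nested inside another filtered prefix can be removed (or shortened) without violating its router's capacity constraint, using the slack created by the overlap, and then re-run the Proposition~\ref{prop:flooding} clean-up locally if the deletion left a single-child node. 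Since every prefix removed in this clean-up is good-only, Eq.~(\ref{P6-coverage}) stays satisfied and Eq.~(\ref{P6-OF}) does not increase, so iterating it yields an optimal solution of the claimed form; the step that needs the most care --- and which I would check first --- is verifying that this slack argument always goes through, or, failing that, invoking the topology assumptions under which the per-router capacity constraints are jointly compatible.
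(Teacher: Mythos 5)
Your proposal follows essentially the same route as the paper's proof. The paper likewise observes that Eq.~(\ref{P6-cardinality}) and Eq.~(\ref{P6-capacity}) are imposed independently at every router, so the filters installed at router $u$ form a feasible FLOODING instance with parameters $F^{(u)}_{max}$, $C^{(u)}$, $T_0^{(u)}$, representable (via Proposition~\ref{prop:flooding}) as a pruned subtree of LCP-tree($\mathcal{BL}\cup\mathcal{WL}$) with at most $F^{(u)}_{max}$ leaves and blocked traffic at least $T_0^{(u)}-C^{(u)}$; it then reads the leaf-disjointness of the $|\mathcal R|$ subtrees directly off the coupling constraint Eq.~(\ref{P6-coverage}). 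Your first two steps are exactly this decomposition, and your check that the per-router clean-up filters the same bad addresses and traffic, hence leaves Eq.~(\ref{P6-coverage}) and the objective Eq.~(\ref{P6-OF}) untouched, is sound. Where you diverge is the final step: you correctly point out that Eq.~(\ref{P6-coverage}) forbids nested filtered prefixes across routers only when the nested region contains a bad address, so redundant filtering of good-only prefixes is not excluded by feasibility; the paper does not distinguish this case and simply asserts that every leaf appears in at most one subtree. Your sketched repair (choose an optimal solution with fewest filters and delete or shorten a nested good-only prefix using the slack created by the overlap) is, as you concede, unverified, and indeed it is not obvious it goes through: in this formulation the per-router volumes $T_0^{(u)}$, $g^{(u)}_{p/l}$, $b^{(u)}_{p/l}$ are fixed inputs, so another router's overlapping filter creates no slack in the capacity constraint of the router owning the nested prefix, and deleting that prefix can violate Eq.~(\ref{P6-capacity}) there. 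So this corner case remains open in your write-up --- but since the paper's own proof disposes of it with a one-line appeal to Eq.~(\ref{P6-coverage}), your argument matches the paper's in substance and is, on this particular point, more careful than the paper rather than less.
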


{\em Proof.} Feasible solutions of DIST-FLOODING allocate filters on different routers s.t. Eq.(\ref{P6-cardinality}) and
(\ref{P6-capacity}) are satisfied independently at every router. In the LCP tree, this means having $|\mathcal R|$ subtrees, one for every router,
each having at most $F_{max}^{(u)}$ leaves and associated blocked traffic $\geq T^{(u)}_0 - C^{(u)}$, where $T^{(u)}_0$ is the total incoming traffic at router $u$. Each subtree can be thought as a  feasible solution of a FLOODING problem. Eq.(\ref{P6-coverage}) ensures that the same address is not filtered multiple times at different routers, to avoid waste of filters.
In the LCP-tree, this translates into every leaf appearing at most in one subtree. \hfill $\blacksquare$

{\em Algorithm.}
Constraint (\ref{P6-coverage}), which imposes that different routers do not block the same prefixes,  prevents us from a direct decomposition of the problem.   To decouple the problem, consider
the following partial Lagrangian relaxation:
\begin{align}
\nonumber L(x, \lambda) = &\sum_{u\in \mathcal R}\sum_{p/l} g^{(u)}_{p/l} x^{(u)}_{p/l} + \sum_{ip \in\mathcal{BL}}\lambda_{ip} \Big( \sum_{u\in\mathcal R}\sum_{p/l \ni ip} x^{(u)}_{p/l}  - 1\Big) \\
 =& \sum_{u\in \mathcal R}\Big(  \sum_{p/l} \Big(g^{(u)}_{p/l}  + \lambda_{p/l} \Big) x^{(u)}_{p/l}\Big)    - \sum_{ip \in\mathcal{BL}} \lambda_{ip}
\end{align}
where $\lambda_{ip}$ is the Lagrangian multiplier (price) for the constraint in Eq.(\ref{P6-coverage}), and
$\lambda_{p/l} =  \sum_{ip \in p/l}\lambda_{ip}$ is the price associated with prefix $p/l$. With this relaxation, both the objective function and the other constraints immediately decompose in  $|\mathcal R|$ independent sub-problems, one per router $u$:
\begin{align}\label{P6sub-OF}
\min \sum_{p/l} \Big(g^{(u)}_{p/l} + \lambda_{p/l}\Big)x^{(u)}_{p/l}&&
\end{align}
\begin{align}
\label{P6sub-cardinality} 	\text{s.t.} \sum_{p/l} x^{(u)}_{p/l} &  \leq F_{max}^{(u)} \quad && \\
\label{P6sub-capacity}    		T_0^{(u)} - \sum_{p/l}\Big( g^{(u)}_{p/l} + b^{(u)}_{p/l}\Big)  x^{(u)}_{p/l} &\leq C^{(u)}  && 
\end{align}
The dual problem is:
\begin{align}
\label{P6master-OF}
\max_{\lambda_{ip} \geq 0} \sum_{u\in \mathcal R}h_u(\lambda)  - \sum_{ip \in\mathcal{BL}} \lambda_{ip}
\end{align}
where $h_u(\lambda)$ is the optimal solution of (\ref{P6sub-OF})-(\ref{P6sub-capacity}) for a given $\lambda$. Given the prices $\lambda_{ip}$, every sub-problem (\ref{P6sub-OF})-(\ref{P6sub-capacity}) can be solved independently and optimally by router $u$ using \eg Eq. (\ref{P5-DP}). Problem (\ref{P6master-OF}) can be solved using a projected sub-gradient method 
 \cite{KPbook}.
In particular, we use the following update rule to compute shadow prices at each iteration:
$$\lambda^{(k+1)}_{ip} = \lambda^{(k)}_{ip} + \alpha( \sum_u \sum_{ p/l \ni \ ip} x_{p/l}^{(u)} - 1)$$
where $\alpha$ is the step size. The interpretation of the update rule is quite intuitive: for every $ip$ that is filtered with multiple filters the corresponding shadow price, $\lambda_{ip}$, is augmented proportionally to the number of times it is blocked.
Increasing the prices has in turn the effect of forcing the router to try to unblock the corresponding $ip$. The price is increased until a single filter is used to block that $ip$.

Note, however, that since $x$ is an integer variable, $x \in\{0,1\}$, the dual problem is not always guaranteed to converge to a primal feasible solution \cite{boyd}.

{\em Distributed vs. Centralized Solution.}
The above formulation lends itself naturally to a distributed implementation.
Each router needs to solve only their own sub-problem
(\ref{P6sub-OF})-(\ref{P6sub-capacity}) independently from others. A single machine (\eg the victim's gateway
or a dedicated node) should solve the master problem (\ref{P6master-OF}) 
to iteratively find the prices that coordinate all sub-problems. At every iteration of the sub-gradient, the new $\lambda_{ip}$'s need to be broadcasted to all routers. Given the $\lambda_{ip}$'s, the routes solve independently a sub-problem each and return the computed $x_{p/l}^{(u)}$ to the node in charge of the master problem.
Even in a centralized setting, our scheme lends itself to  parallel computation
of Eq.(\ref{P6-OF})-(\ref{P6-coverage}).

\section{\label{sec:simulations}Practical Evaluation}

\begin{figure}[t!]
\begin{center}
\includegraphics[scale=0.35 ]{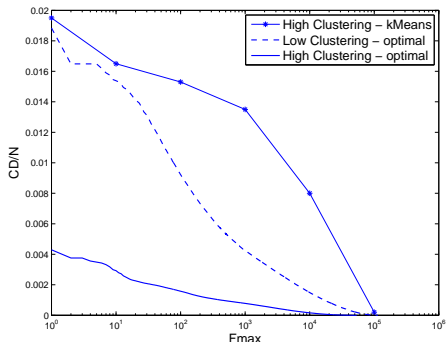}
\vspace{-15pt}
\caption{\label{fig:block-all} {\bf Evaluation of BLOCK-ALL in Scenarios I and II} in terms of collateral damage (CD) (normalized over the number of malicious sources $N$) vs. number of filters $F_{max}$. We compare Algorithm 1 to K-means. In particular, we simulated 50 runs of Lloyd's heuristic to solve the K-means problem in order to avoid local minima. We also run Algorithm 1 on two traces, those with the highest and lowest degree of clustering.
``High clustering'' and ``Low clustering'' refers to the two example blacklists in Scenarios I and II, respectively. }
\end{center}
\vspace{-20pt}
\end{figure}

In this section, we evaluate our algorithms using real logs from malicious traffic.
We demonstrate that our algorithms bring significant benefit compared to non-optimized filter selection or to generic clustering algorithms, in a wide range of scenarios. The reason behind this benefit is the well-known fact that sources of malicious traffic exhibit spatial and temporal clustering \cite{uncleanness, clustering, mao2006analyzing, ramachandran2006understanding, venkataraman2007exploiting, xie2007dynamic,zhang-highly}, which is exploited by our algorithms. Indeed, clustering in a blacklist allows to use a small number of filters to block prefixes with high density of malicious IPs at low collateral damage. Furthermore, it has also been observed that the good and bad addresses are typically not co-located, which allows for distinguishing between good and bad traffic
\cite{ramachandran2006understanding, venkataraman2007exploiting, estan}, and in our case
for efficient filtering of the most ``contaminated'' prefixes.


\subsection{Simulation Setup}

We used 61-day logs from {\tt Dshield.org} \cite{dshield} - a repository of firewall and intrusion detection logs collected.
The dataset consists of 758,698,491 attack reports, from 32,950,391 different IP sources belonging to about 600 contributing organizations. Each report includes a timestamp, the contributor ID, and the information for the flow that raised the alarm, including the (malicious) source IP and the (victim) destination IP.  Looking at the attack sources in the logs, we verified that malicious sources are clustered in a few prefixes, rather than uniformly distributed over the IP space, consistently with what was observed before \eg in \cite{clustering, uncleanness,mao2006analyzing, ramachandran2006understanding, venkataraman2007exploiting}.

In our simulations, we considered a blacklist to be the set of sources attacking a particular organization (victim) during a day-period. The degree of clustering varied significantly in the blacklists of different victims and across different days. The higher the clustering, the more the benefit we expect from our approach.
We also simulated the whitelist, by generating good IP addresses according to the multifractal distribution in \cite{kohler} on routable prefixes. We performed the simulations  on a linux-machine with 2.4 GHz processor with 2 GB RAM.

\subsection{Simulation of BLOCK-ALL and BLOCK-SOME}


{\em Simulation Scenarios I \& II.} In Fig. \ref{fig:block-all}, we considered two example blacklists corresponding to two different victims, each attacked by a large number (up to 100,000) of malicious IPs in a single day. We picked to present the blacklists with the highest and the lowest degree of source clustering observed in the entire dataset these two in particular, referred to as ``High Clustering'' (I) and ``Low Clustering'' (II), respectively; these demonstrate the range of benefit from our approach.

\begin{figure}[t!]
\begin{center}
\includegraphics[scale=0.40 ]{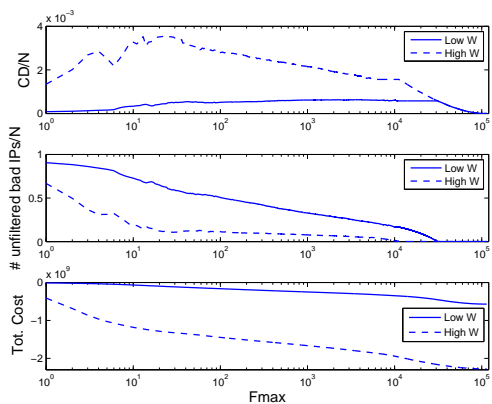}
\vspace{-15pt}
\caption{\label{fig:block-some} {\bf Evaluation of BLOCK-SOME for Scenario II (Low Clustering blacklist)}. Three metrics are considered: (a) Collateral damage (CD) (b) Number of unfiltered bad IPs (UBIP) (c) Total cost $CD+W\cdot UBIP$. The operator expresses preference for  UBIP vs. CD by tuning the weight $W=\frac{w_b}{w_g}$. We considered two values of $W$: a higher ($2^{14}$) and a lower ($2^{10}$) one.}
\end{center}
\vspace{-20pt}
\end{figure}

{\em BLOCK-ALL.} We ran Algorithm 1 in these two scenarios and we show the results in Fig.\ref{fig:block-all}. We made the following observations.
First, the optimal algorithm performs significantly better than a generic clustering algorithm that does not exploit the structure of IP prefixes.
In particular, it reduces the collateral damage (CD) by up to $85$\% compared to K-means, when run on the same (high-clustering) blacklist.
Second, the degree of clustering in a blacklist matters: the CD is lowest (highest) in the blacklist with highest (lowest) degree of clustering, respectively.
Results obtained for different victims and days were similar and lied in between the two extremes. A few thousands of filters were sufficient to significantly reduce collateral damage in all cases.



{\em BLOCK-SOME.} In Fig. \ref{fig:block-some}, we focus on Scenario II, \ie the Low Clustering blacklist and thus the highest CD (dashed line in Fig.\ref{fig:block-all}), which is the least favorable input for our algorithm. Unlike BLOCK-ALL, BLOCK-SOME allows the operator to trade-off lower CD for some unfiltered bad IPs by appropriately tuning the weights. 
 For simplicity, in Fig. \ref{fig:block-some}, we assigned the same weights $w_g$ and $w_b$ to all good and bad sources;
however, the framework has the flexibility to assign different weights to different IPs. In Fig. \ref{fig:block-some}(a), the CD is always smaller than the corresponding CD in Fig. \ref{fig:block-all}; they become equal only when we block all bad IPs. In Fig. \ref{fig:block-some}(b), we observe that BLOCK-SOME reduces the CD by 60\% compared to BLOCK-ALL while leaving unfiltered only 10\% of bad IPs and using only a few hundreds a filters.


\begin{figure*}
\centering
\subfigure[CD/N vs Fmax]
{\includegraphics[width=2.35in]{./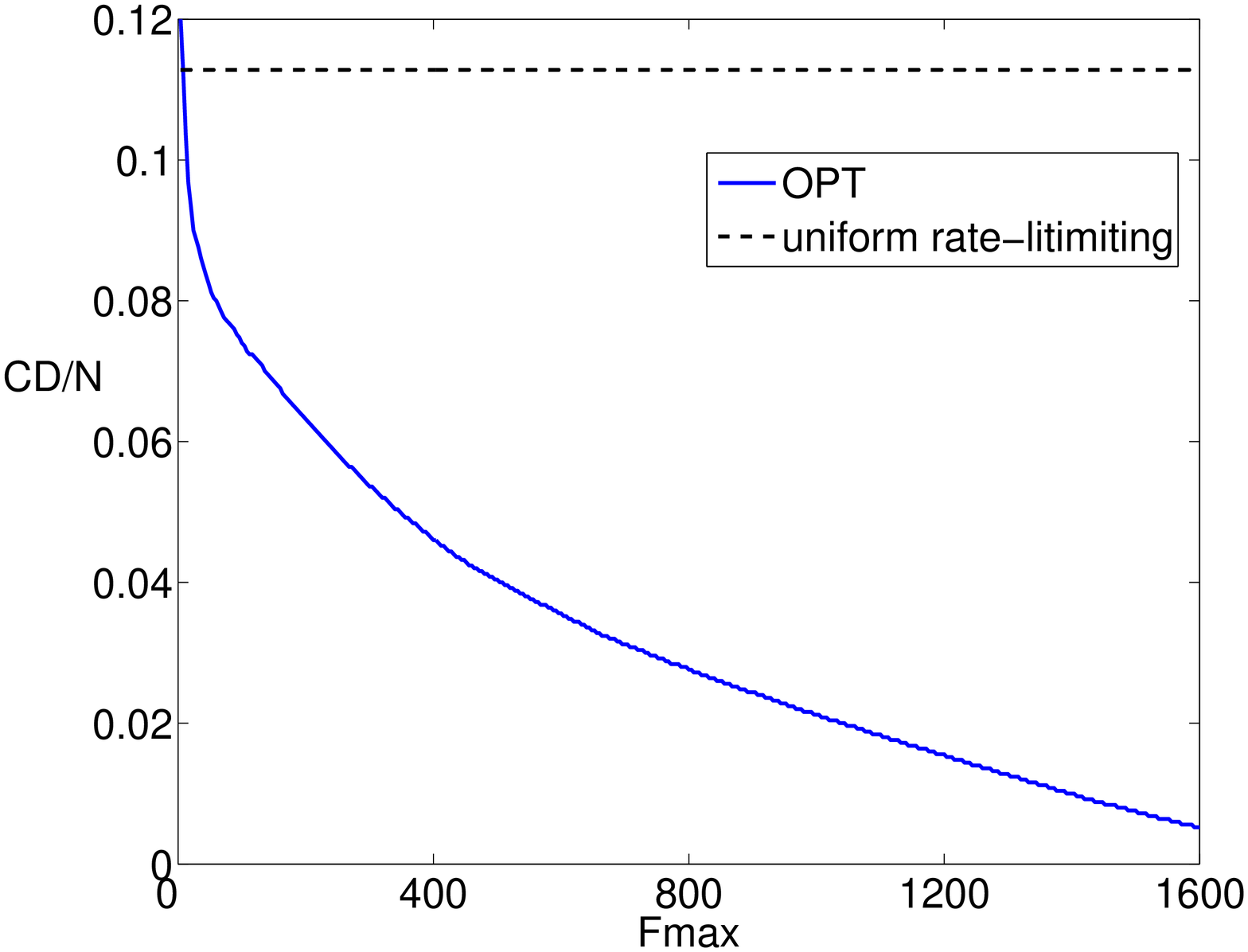}}
\subfigure[CD/N vs Cmax]
{\includegraphics[width=2.3in]{./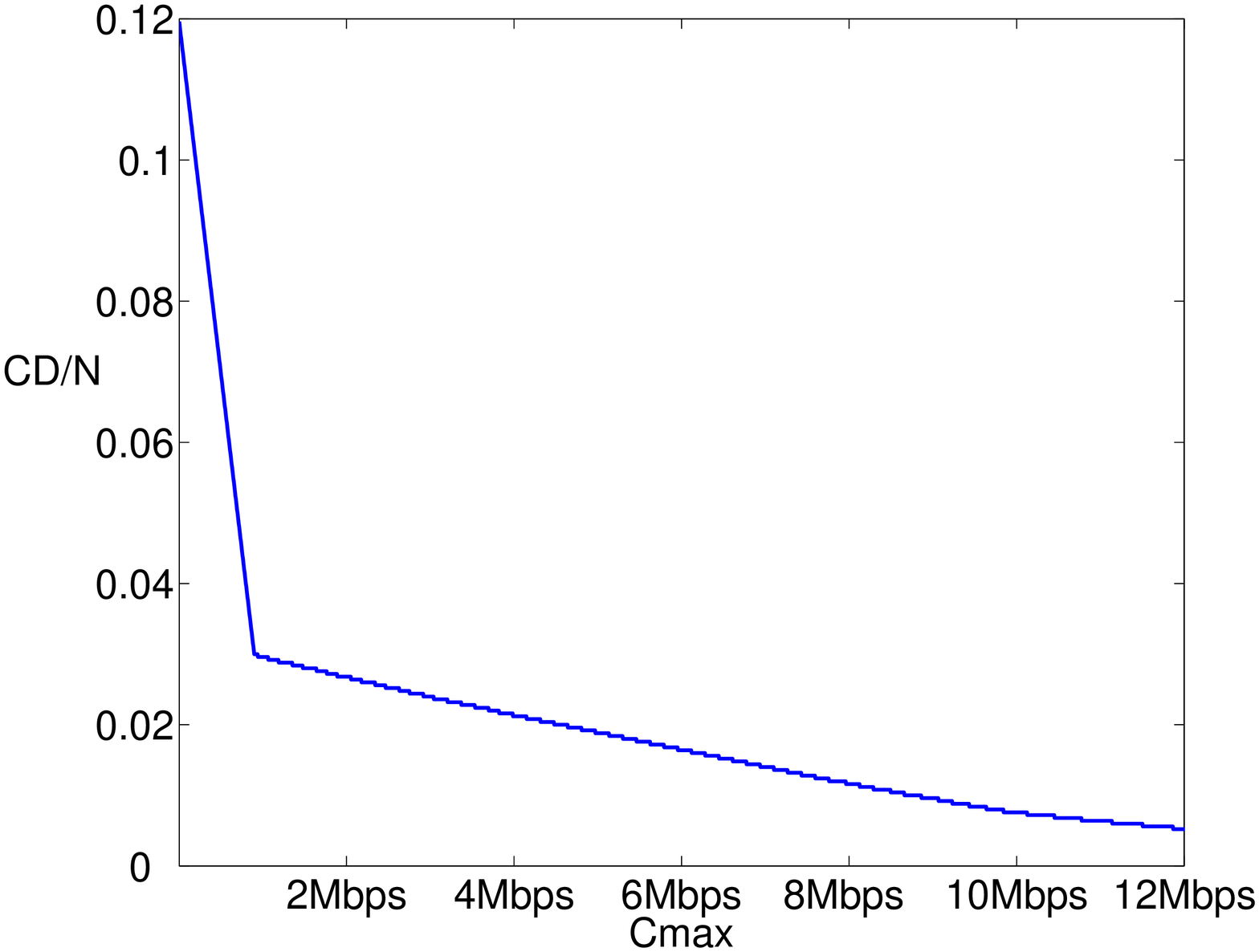}}
\subfigure[Fmax vs Cmax]
{\includegraphics[width=2.35in]{./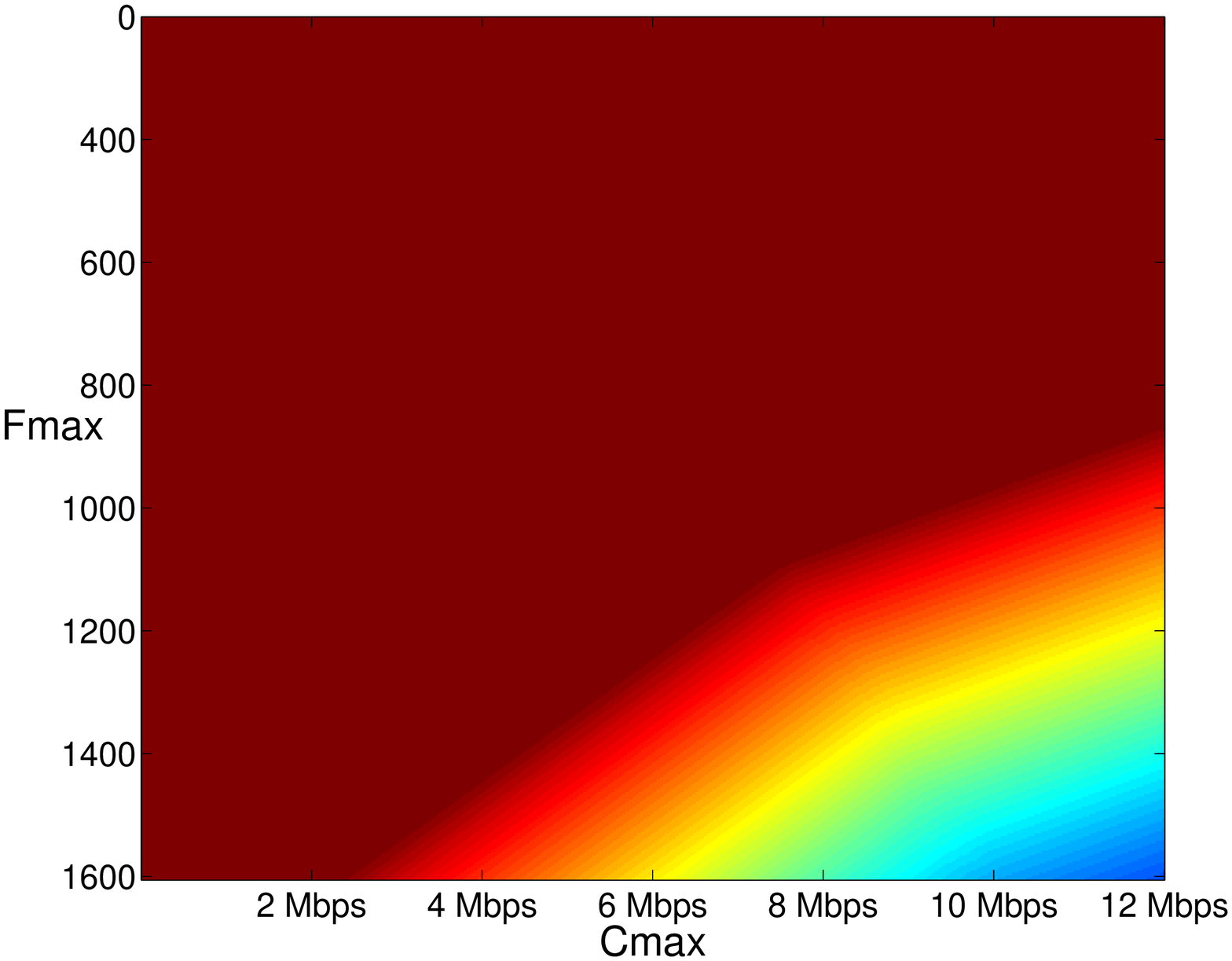}}
\caption{ {\bf Optimal Solution of FLOODING in Scenario III.} In (a), we show the normalized collateral damage, $CD/N$, as a function of the number of available filters, $F_{max}$, when $C$ is fixed, $C=C_{max}$. In (b), we show  $CD/N$ as a function of the available capacity, $C_{max}$, when $F$ is fixed, $F=F_{max}$. In (c), we show how the ratio $CD/N$ varies as a function of both $C_{max}$ and $F_{max}$. \label{fig:flooding}}
\vspace{-10pt}
\end{figure*}

In Fig. \ref{fig:block-some}(c), the total cost of the attack (\ie the weighted sum of bad and good traffic blocked) decreases as $F_{max}$ increases.
The interaction between these two competing factors is complex and strongly depends on the input blacklist and whitelist.
In the data we analyzed, we observed that CD tends to first increase and then decrease with $F_{max}$, while the number of unfiltered bad IPs tends to decrease
 The ratio $w_b/w_g $ captures the effort\footnote{Since we picked a ratio $w_b/w_g > 1$, bad IPs are more important. When $F_{max}$ is high, the algorithm first tries to cover small clusters or single  bad IPs. In the case of high $W$, this happens around $10,000$ filters. CD remains almost constant in this phase, at the end of which all bad IPs are filtered (as in Fig.\ref{fig:block-some}(b)). In the final phase, the algorithm releases single good IPs, which are less important and all bad IPs are blocked similarly to BLOCK-ALL.} made by BLOCK-SOME to block all bad IPs and become similar to BLOCK-ALL.



\subsection{Simulation of FLOODING and DIST-FLOODING}




{\em Simulation Scenario III.} We consider a web server under a DDOS attack. We assume that the server has a typical access bandwidth of $C=100$ Mbps and can handle 10,000 connections per second (a typical capability of a web server that handles light content). We assume that each good (bad) source generates the same amount of good (bad) traffic. We also assume that $F_{max}=12,000$ filters are available (consistently with the discussion in footnote 1) and we vary $F=1,...F_{max}$. 
Before the attack, $5,000$ good sources are picked from \cite{kohler} and utilize 10\% of the capacity. During the attack, the total bad traffic is  $10C=1$Gbps and is generated by
 a typical blacklist (141,763 bad source IPs), based on {\tt Dshield} logs of a randomly chosen victim for a randomly chosen day.\footnote{However, because this problem is NP-hard  we do not simulate the entire IP space, but the range $[60.0.0.0, 90.0.0.0]$,  which is known to account for the largest amount of malicious traffic, \eg see \cite{ramachandran2006understanding}. We also scale  all parameters by a factor of 8, $F_{max}, C_{max}, w_{i}$ to maintain a constant ratio between the number of IPs and $F_{max}$, and, the total flow generated and $C_{max}$.}


{\em FLOODING - Optimal.}
Fig. \ref{fig:flooding}(a) and Fig. \ref{fig:flooding}(b) show the  collateral damage of the optimal solution of FLOODING, for Scenario III, as a function of the number of available filters $F_{max}$, and as a function of the bottleneck capacity $C$, respectively. As baseline for comparison, we simulate {\em uniform rate-limiting}, which drops the same fraction of all incoming connections and is a common practice in DDOS attacks. Since bad sources outnumber the good sources in a typical DDOS attack, uniform rate-limiting  penalizes disproportionally the good sources.
While this solution is always applicable and requires only one rate limiter, more filters  can drastically reduce the collateral damage.

We also observe that varying the number of filters or the available capacity has a different impact on the collateral damage.
While the collateral damage decreases exponentially as the number of filters increases, when we increase the available capacity we observe two trends.
First, as capacity increases, the optimal solution allows traffic from good sources that do not belong in prefixes with many malicious sources.
This causes a linear decrease with slope equal to the amount of traffic generated by good sources.
For even larger $C$, good IPs located in the same prefix as malicious sources are released.
This  trend depends on the specific clustering of good and bad IPs considered as well as on the amount of traffic generated by both good and bad sources.
In  Fig. \ref{fig:flooding}(c) we plot the collateral damage as a function of both the number of available filters and the available capacity.
When the value of $F$ ($C$) is too low, increasing $C$ ($F$) does not yield any benefit.
Most of the improvement is obtained when both resources increase.

\begin{figure}[t!]
\begin{center}
\includegraphics[width=3in]{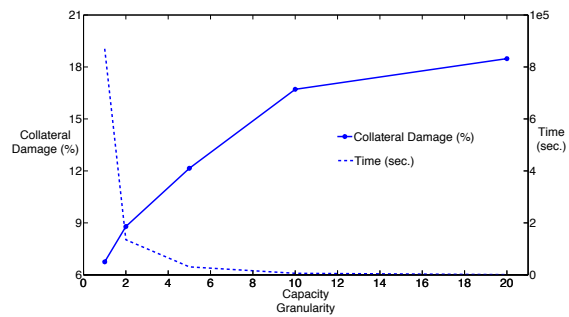}
\caption{\label{fig:flooding-heu} {\bf Heuristic solution of FLOODING in Scenario IV.} An approximate  solution  (higher CD)  is obtained by solving only the sub-problems $z_p(f,n\Delta C)$ for   $n\in \mathbb N$ and $C=n\Delta C \leq C_{max},~n=1,2...$. The coarser the capacity increments $\Delta C$, the fewer sub-problems we need to solve, but at the cost of higher collateral damage. In this scenario, increasing $\Delta C$ significantly reduces the computational time by 3 orders of magnitude, while the percentage of good traffic that is blocked (CD \%) is only increased by a factor 3.}
\end{center}
\vspace{-15pt}
\end{figure}

{\em FLOODING - Heuristic.} 
The benefit of the optimal solution of FLOODING comes at high computational cost, due to the intrinsic hardness of the FLOODING problem. 
To address this issue, we design a heuristic for solving FLOODING, which can be tuned to achieve the desired tradeoff between collateral damage and computational time.
In particular, instead of solving all subproblems, $z_p(f,c)$, for all possible values of $f\leq F_{max}$ and $c\leq C_{max}$, we consider discrete increments of capacity $c=n\Delta C$, with step size $\Delta C$.
If $\Delta C = \min\{ w_{ip} \}$, the finest granularity of $c$ is considered, and the problem is optimally solved.
If $\Delta C > \min\{ w_{ip} \}$ we may get a sub-optimal solution, but we reduce the computation cost, as fewer iterations are required to solve the DP.

{\em Simulation Scenario IV.} We consider again a DDOS attack launched by 61,229 different bad source IPs, based on the {\tt Dshield.org} logs. The available capacity, $C=100$ Mbps. Before the attack, the legitimate traffic consumes $\frac{1}{2}C=50$Mbps. During the attack, the total bad traffic generated is $100C=10$Gbps. This scenario is more challenging than scenario III, because there is less unused capacity before the attack, and more malicious  traffic during the flooding attack.

In Fig. \ref{fig:flooding-heu},  we show the percentage of good traffic that is blocked by the heuristic vs. the time required to obtain a solution, for scenario IV.
As we can see in Fig. \ref{fig:flooding-heu}, the optimal solution of FLOODING ($\Delta C=1$) requires about 1 day of computation and has CD that is only 6\% of the total good traffic.
Larger values of $\Delta C$ allows to dramatically reduce the computational time by about 3 orders of magnitude while the CD  is only increased by a factor 2-3. This asymmetry was also the case in other {\tt Dshield} logs we simulated. This can be very useful in practice: an operator may decide to use an approximation of the optimal filtering policy to immediately cope with incoming bad traffic,
and then successively refine the allocation of filters to further reduce the collateral damage if the attack persists. 

\begin{figure}
\begin{center}
\includegraphics[height=1.4in ]{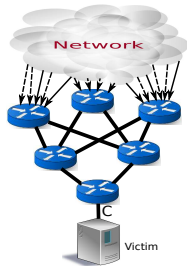}
\caption{\label{fig:topology} {\bf Distributed Flooding.}
 This topology exemplifies the part of a potentially larger ISP topology involved in routing and blocking traffic towards victim $V$.
 The edge routers receive all incoming, malicious and legitimate, traffic towards victim $V$ and route it through shortest-paths  with ties broken randomly. Any of the traversed routers (indicated with circles) can be used to deploy ACLs and block the malicious traffic.}
\end{center}
\vspace{-15pt}
\end{figure}

{\em DISTRIBUTED-FLOODING.} We simulated the scenario where an ISP utilizes multiple routers to collaboratively block malicious traffic. We consider the same scenario (III) as for the optimal flooding for a single router, but now we assume that the traffic reaches the victim routed over the example topology illustrated in Fig.\ref{fig:topology}.

We use a sub-gradient descent method to solve the dual problem in Eq.(\ref{P6master-OF}). In Fig.\ref{fig:dist-flooding}, we show the convergence of the method for two different step sizes: $0.05$ and $0.01$. We also compare against the ``no coordination'' case, when routers do not coordinate  but act independently to block malicious traffic; this corresponds to the first iteration of the sub-gradient method. In the next iterations, routers coordinate, through the exchange of shadow prices $\lambda$, and avoid the redundant overlap of prefixes at multiple locations. This reduces the collateral damage significantly, \ie by $\sim 50$\%.


\begin{figure}
\begin{center}
\includegraphics[width=2.6in]{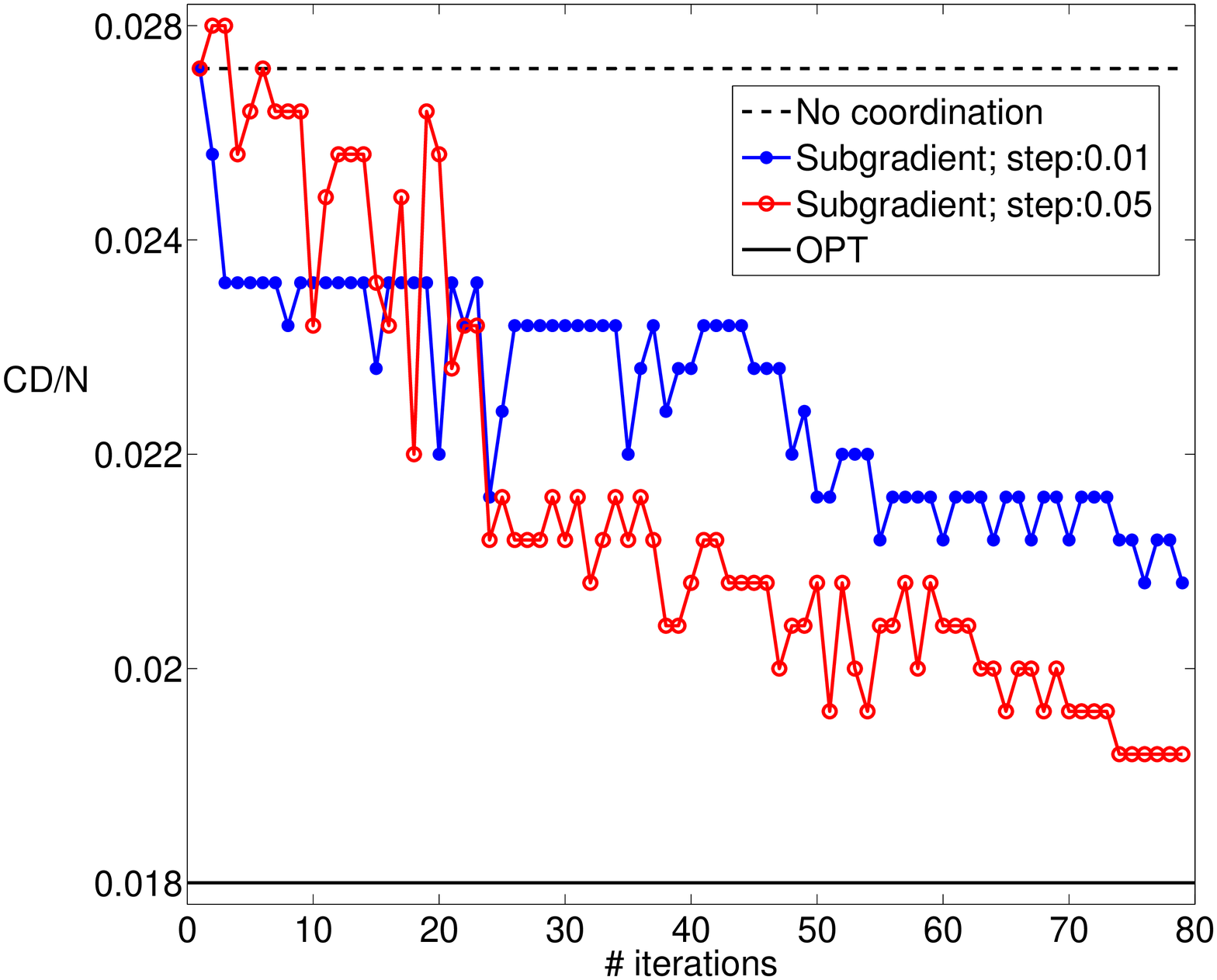}
\caption{\label{fig:dist-flooding} {\bf Evaluation of DIST-FLOODING in Scenario III.}  Results are shown for the distributed algorithm and two different values for the step size (0.01 and 0.05) of the subgradient method. The dashed line shows the case of ``No Coordination'', \ie when each router acts independently. }
\end{center}
\vspace{-10pt}
\end{figure}

\section{\label{sec:related}Our Work in Perspective}


{\bf Bigger Picture and Assumptions.}
Dealing with malicious traffic is a hard problem that
requires the cooperation of several components, including detection and mitigation techniques,
as well as architectural aspects.
In this paper, we do not propose a novel solution. Instead,
we optimize the use of filtering - a mechanism that already exists on the Internet
today and is a necessary building block of any bigger solution. We focus on the optimal construction
of filtering rules, which can be then installed and propagated by filtering protocols \cite{aitf, xiaowei2}.

We rely on  an intrusion detection system
or on historical data, to distinguish good from bad traffic and to provide us with a blacklist.
Detection of malicious traffic is an important problem but out of the scope of this paper.  The sources of legitimate traffic are also assumed known and used for assessing the
collateral damage; \eg web servers or ISPs typically keep historical data and know their important customers.

We also consider addresses in the blacklist to not be spoofed. This is reasonable today that attackers use botnets, and control a huge number of infected hosts for a short period of time, so that they do not even need to use spoofing. On 2005, less than 20\% of addresses were spoofable \cite{spoofer}, while in 2008, only 7\% of addresses in Dshield logs were found likely spoofed \cite{clustering}. Even if there is some amount of spoofed traffic, our algorithms treat it as the rest of malicious traffic and weight the cost vs. the benefit of blocking a source prefix (which may include both malicious spoofed and legitimate traffic). Looking into the future, there is also a number of proposals promising to enforce source accountability, including ingress filtering \cite{ingress}, self-certifying addresses \cite{AIP}, packet passports\cite {passport}. To the extent that spoofing interferes with the ability to define blacklists, our algorithms work best together with an anti-spoofing mechanism, but also do the best that can be done today without it.

{\bf Deployment scenario.} A practical deployment scenario is that of a single network under the same administrative authority, such as an ISP or a campus network. The operator can use our algorithms to install filters at a single edge router or at several routers, in order to optimize the use of its resources and to defend against an attack in a cost-efficient way. Our distributed algorithm may also be useful, not only for a routers within the same ISP, but also, in the future, when different ISPs start cooperating against common enemies.

{\bf ACLs vs. firewall rules.} Our algorithms may also be applicable in a different context: to configure firewall rules to protect public-access
networks, such as university campus networks or web-hosting networks. 
Unlike routers where TCAM puts a hard limit on the number of ACLs, there is no hard limit on the number of firewall rules, in software; however,
there is still an incentive to minimize their number and thus any associated performance penalty \cite{hipac}. 

There is a body of work on firewall rule management and (mis)configuration \cite{firewalls},
which aims at detecting anomalies such as the existence of multiple firewall rules that match the same packet, or the existence of a rule that will never match packets flowing through a specific firewall.
In contrast, we focus on resource allocation: given a blacklist and a whitelist as input to the problem, our goal is to optimally select which prefixes to filter so as to optimize an appropriate objective subject to the constraints. Furthermore, the work in \cite{firewalls} considers firewalls for enterprises, which are not supposed to be accessed from outside and thus can be protected without filtering rules.

{\bf Measurement studies.} Several measurement studies have demonstrated that malicious sources exhibit spatial and temporal clustering  \cite{uncleanness, clustering, mao2006analyzing, ramachandran2006understanding, venkataraman2007exploiting, zhang-highly}.  In order to deal with dynamic malicious IP addresses \cite{xie2007dynamic}, IP prefixes rather than individual IP addresses are typically considered. The clustering, in combination with the fact that the distribution of addresses as well as other statistical characteristics differ for good and bad traffic, have been exploited in the past for detection and mitigation of malicious traffic, such as \eg spam \cite{ramachandran2006understanding,venkataraman2007exploiting} or DDoS \cite{estan}.
In this work, we exploit these characteristics for efficient prefix-based filtering of malicious traffic.

{\bf Prefix Selection.} The work in \cite{estan} studied source prefix filtering for classification and blocking of DDoS traffic, which is closely related to our FLOODING problem. The selection of prefixes in \cite{estan} was done heuristically, thus leading to large collateral damage was incurred. In contrast, we tackle analytically the optimal source prefix selection so as to minimize collateral damage. Furthermore, we provide a more general framework for formulating and optimally solving a family of related problems, including but not limited to FLOODING.

The work in \cite{dawn}, is related to our TIME-VARYING problem:
   it designed and analyzed an online learning algorithm for tracking malicious IP prefixes based on a stream of labeled data. The goal was detection, \ie classifying a prefix as malicious, depending on the ratio of malicious and legitimate traffic it generates, and subject to a constraint on the number of prefixes. In contrast: (i) we identify precisely (not approximately) the IP prefixes with the highest concentration of malicious traffic; (ii) we follow a different formulation (dynamic programming inspired by knapsack problems); (iii) we use the results of detection as input to our filtering problem.


An earlier body of literature focused on identifying IP prefixes with significant amount  of network traffic, typically referred to as hierarchical heavy hitters: \cite{zhang2004online, estan2003automatically,cormode2004diamond}. However, it did not consider the interaction between legitimate the malicious traffic within the same prefix, which is the core tradeoff studied in this paper.


{\bf Relation to Knapsack Problems.} Filter selection belongs to the family of  multidimensional knapsack problems (dKP) \cite{KPbook}.
The general dKP problem is well-known to be NP-hard.
The most relevant variation is the knapsack with cardinality constraint (1.5KP) \cite{old15KP, Caprara},
which has $d=2$ constraints, one of them being a limit on the number of items:
$\sum_{j \in \mathcal N}  w_{j} x_{j}  \leq C, \sum_{j \in \mathcal N}  x_{j}  \leq k$.
The 1.5KP problem is also NP-hard.

These classic problems do not consider correlation between items. However, in filtering,
the selection of an item (prefix) voids the possibility to select other items (all overlapping prefixes).
dKP problems with correlation between items have been studied in \cite{dKP-GUB, 1.5KP-GUB},
where the items were partitioned into classes and up to one item per class was picked.
%
%
In our case, a class is the set of all prefixes covering a certain address. Each item (prefix) can belong simultaneously to any number of classes, from one class (/32 address) to all classes (/0 prefix).
To the best of our knowledge, we are the first to tackle a case where the classes are not a partition of the set of items.

 A continuous relaxation does not help either. Allowing $x_{p/l}$ to be fractional corresponds to rate-limiting of prefix $p/l$. This has no advantage neither from a practical (rate limiters are more expensive than ACLs,  because in addition to looking up packets in TCAM, they also require rate and computation on the fast path) nor from a theoretical point of view (the continuous 1.5KP is still NP-hard \cite{continuous-1.5KP}.)

In summary, the special structure of the prefix filtering problem, \ie the hierarchy
and overlap of candidate prefixes, leads to novel variations of dKP that could not be solved by directly applying existing methods
in the KP literature.


{\bf Our prior work.} This journal paper builds on our conference paper in \cite{infocom09}.
Compared to \cite{infocom09}, new contributions in this paper include: the formulation and optimal solution of the time-varying version of the filtering problem; an extended evaluation section, which simulates all filtering problems over {\tt Dshield.org} logs, including FLOODING and DIST-FLOODING which were not evaluated in \cite{infocom09}; and additional proofs, complexity analysis and comments that were not in \cite{infocom09}.

Earlier on, in a related workshop paper \cite{ita08}, we also studied optimal range-based filtering, where malicious source  addresses were aggregated into continuous ranges (of numbers in the IP address space $[0, 2^{32}-1]$), instead of prefixes. This was an easier problem that allowed for greedy solutions. Unfortunately, ranges are not implementable in ACLs; furthermore, it is well-known that ranges cannot be efficiently approximated by a combination of prefixes  \cite{varghese}. Therefore, despite the intuition we gained in \cite{ita08}, we had to solve the problem of prefix-based filtering from scratch in this paper.

\section{\label{sec:conclusion}Conclusion}

In this paper, we introduce a formal framework for optimal source prefix-based filtering. The framework is rooted at the theory of the knapsack problem and provides a novel extension to it. Within it, we formulate five practical problems, presented in increasing order of complexity. For each
problem, we designed optimal algorithms that are also low-complexity (linear or pseudo-polynomial in the input size).
We simulate our algorithms over {\tt Dshield.org} logs and demonstrate that they bring significant benefit compared to non-optimized filter selection or to generic clustering algorithms. A key insight behind that benefit is that our algorithms exploit the spatial and temporal clustering exhibited by sources of malicious traffic.



\section*{Acknowledgements}
We are grateful to P. Barford and M. Blodgett at the University of Wisconsin, Madison, for making the 6-month {\tt Dshield.org} logs available to us. We would also like to thank M. Faloutsos for insightful discussions.
This work has been supported by the NSF CyberTrust grant 0831530.

\bibliographystyle{IEEEtran}
\bibliography{filtering}

\vspace{-20pt}

\begin{biography}[{\includegraphics[width=1in,height=1.25in, clip,keepaspectratio]{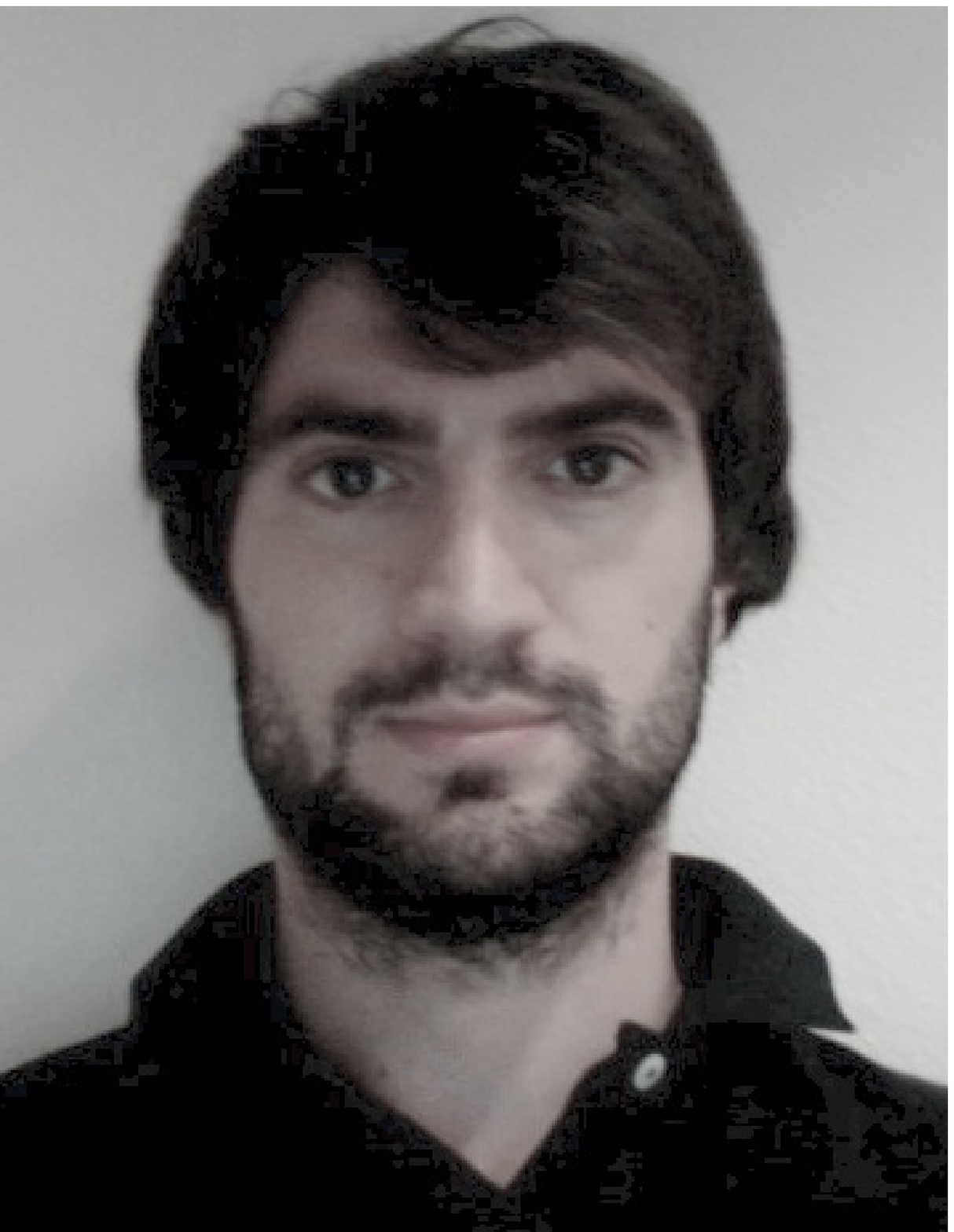}}]
{Fabio Soldo} is currently a Ph.D. candidate in the Networked Systems Program at UC Irvine. He received the B.S. degree in Mathematics from Politecnico di Torino, Italy, in 2004, and M.S. degree in Mathematical Engineering from Politecnico di Torino and Politecnico di Milano, Italy, in 2006. He had intrenships with Telefonica Research, Docomo Labs and Google. His research interests are in the areas of design and analysis of network algorithms and network protocols, and defense mechanisms against malicious traffic.
\end{biography}
\vspace{-20pt}

\begin{biography}[{\includegraphics[width=1in,height=1.25in,clip,keepaspectratio]{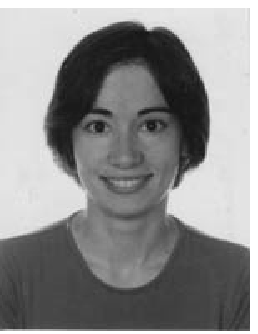}}]
{Katerina Argyraki} is a researcher with the Operating Systems group in the School of Computer
and Communication Sciences, EPFL, Switzerland.
She works on network architectures and protocols with a focus on
denial-of-service defenses and accountability.
She received her undergraduate degree in Electrical and Computer
Engineering from the Aristotle University, Thessaloniki, Greece, in 1999, and her Ph.D. in
Electrical Engineering from Stanford University, in 2007.
\end{biography}
\vspace{-20pt}

\begin{biography}[{\includegraphics[width=1in,height=1.25in,clip,keepaspectratio]{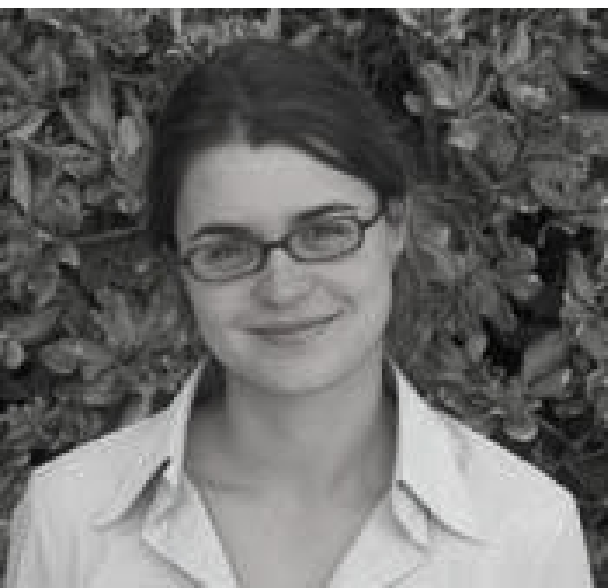}}]
{Athina Markopoulou} (SM '98, M'02) is an assistant professor in the EECS Dept. at the University of California, Irvine. She received the Diploma degree in Electrical and Computer Engineering from the National Technical University of Athens, Greece, in 1996, and the M.S. and Ph.D. degrees, both in Electrical Engineering, from Stanford University in 1998 and 2003, respectively. She has been a postdoctoral fellow at Sprint Labs (2003) and at Stanford University (2004-2005), and a member of the technical staff at Arastra Inc. (2005). Her research interests include network coding, network measurements and security, media streaming and online social networks. She received the NSF CAREER award in 2008.
\end{biography}

\vfill

\end{document}